\pgfplotsset{compat=newest}
\crefname{proposition}{Proposition}{Propositions}
\crefname{equation}{}{}
\newtheorem{theorem}{Theorem}[section]
\newtheorem{lemma}[theorem]{Lemma}
\newtheorem{proposition}[theorem]{Proposition}
\theoremstyle{definition}
\newtheorem{definition}[theorem]{Definition}
\newtheorem{remark}[theorem]{Remark}
\crefname{assumption}{Assumption}{Assumptions}
\crefname{definition}{Definition}{Definitions}
\crefname{corollary}{Corollary}{Corollaries}
\crefname{enumi}{item}{items}
\newcommand*{\rom}[1]{\expandafter\@slowromancap\romannumeral #1@}
\newcommandx{\alex}[2][1=]{\todo[linecolor=red,backgroundcolor=red!25,bordercolor=red,#1]{Alex: #2}}
\newcommandx{\jiayu}[2][1=]{\todo[linecolor=red,backgroundcolor=red!25,bordercolor=red,#1]{Jiayu: #2}}
\DeclareMathOperator{\N}{\mathbb{N}}
\DeclareMathOperator{\Z}{\mathbb{Z}}
\DeclareMathOperator{\R}{\mathbb{R}}
\DeclareMathOperator{\tr}{tr}
\renewcommand{\tilde}{\widetilde}
\newcommand{\nm}{\noalign{\smallskip}}
\newcommand{\ds}{\displaystyle}
\renewcommand{\epsilon}{\varepsilon}
\let\emptyset\varnothing
\renewcommand{\tilde}{\widetilde}
\newcommand{\mc}[1]{\mathcal{#1}}
\newcommand{\abs}[1]{\left\lvert#1\right\rvert}
\newcommand{\norm}[1]{\left\lVert#1\right\rVert}
\newcommand{\ldz}{\{\rom{1},\rom{2}\}^{\mathbb{Z}}}
\newcommand{\ldm}{\{\rom{1},\rom{2}\}^M}
\newcommand{\rp}{\mathbb{RP}^1}
\DeclareMathOperator{\SL}{SL}
\title[Topological interface modes in aperiodic subwavelength resonator chains]{Topological interface modes in aperiodic subwavelength resonator chains}
\author{Habib Ammari} %
\address[H. Ammari]{Department of Mathematics, ETH Z\"{u}rich, R\"{a}mistrasse 101, CH-8092 Z\"{u}rich, Switzerland; Hong Kong Institute for Advanced Study, City University of Hong Kong, Kowloon Tong, Hong Kong}
\email{habib.ammari@math.ethz.ch}
\author{Jiayu Qiu} %
\address[J. Qiu]{Department of Mathematics, ETH Z\"{u}rich, R\"{a}mistrasse 101, CH-8092 Z\"{u}rich, Switzerland}
\email{jiayu.qiu@sam.math.ethz.ch}
 \author{Alexander Uhlmann}
\address[A. Uhlmann] {Department of Mathematics, ETH Z\"{u}rich, R\"{a}mistrasse 101, CH-8092 Z\"{u}rich, Switzerland}
 \email{alexander.uhlmann@sam.math.ethz.ch}
\date{}
\begin{document}

\begin{abstract}
We consider interface modes in block disordered subwavelength resonator chains in one dimension. Based on the capacitance operator formulation, which provides a first-order approximation of the spectral properties of dimer-type block resonator systems in the subwavelength regime, we show that a two-fold topological characterization of a block disordered resonator chain is available if it is of dominated type. The topological index used for the characterization is a generalization of the Zak phase associated with one-dimensional chiral-symmetric Hamiltonians. As a manifestation of the bulk-edge correspondence principle, we prove that a localized interface mode occurs whenever the system consists of two semi-infinite chains with different topological characters. We also illustrate our results from a dynamic perspective, which provides an explicit geometric picture of the interface modes, and finally present a variety of numerical results to complement the theoretical results.

\end{abstract}

\maketitle

\bigskip

\noindent \textbf{Keywords.} one-dimensional aperiodic chain, dimer-type block disordered system, topologically protected interface eigenmode, bandgap, uniform hyperbolicity; quasi-periodic system, Zak phase \par

\bigskip

\noindent \textbf{AMS subject classifications.} 
34L15 
35P20, 
37D20, 
37A30, 
47B36. 
35B34 
35J05  
82D03 
\\

\tableofcontents

\section{Introduction}

Subwavelength physics is concerned with wave interactions in materials with structures at subwavelength scales. Its ultimate goal is to manipulate waves at extremely low frequencies. Recent breakthroughs, such as the emergence of the field of metamaterials (\emph{i.e.}, microstructured materials with unusual localization and transport properties), have allowed us to do this in a way that is robust and overcomes traditional diffraction limits using high-contrast resonators; see, \emph{e.g.}, \cite{paa,sheng,fink,cbms}.

Recently, inspired by the rapid development of topological physics \cite{Klitzing80qhe,vonKlitzing2020forty_years,stone1992quantum,qizhang11topo_insulator,bernivig13topo_insulator}, great efforts have been made to realize topological phenomenon in the subwavelength regime. The ultimate goal is to achieve interface or edge modes between media with different topological characters, as a manifestation of the famous bulk-edge correspondence principle (BEC). The mathematical theory of BEC is well established; see \cite{avila2013shortrange+transfer,elgart2005shortrange+functional,graf2013shortrange+scattering,Kellendonk02landau+ktheory,ludewig2020shortrange+coarse,EG2002bec_discrete_functional,drouot2024bec_curvedinterfaces,de2016spectral_flow,Bra2019bec_discrete_K,BKR2017bec_discrete_K,AMZ2020bec_discrete_K,kubota2017bec_discrete_K,taarabt2014landau+functional,cornean2021landau+functional,CG2005bec_schrodinger_functional,DGR2011bec_schrodinger_functional,Kellendonk2004landau+ktheory,BR2018bec_continuous_K,gontier2023edge,bal2019dirac+functional,QB2024bec_dirac_functional,bal2022bec_dirac_functional_symbol,qiu2025bulk,lin2022transfer,drouot2021microlocal,thiang2023bec_inversion}. The most intriguing property of these interface modes is that, due to their topological origin, they exhibit strong robustness under imperfection and disorder present in the system, which leads to great potential for application in energy and information transportation. In addition to the aforementioned study on bulk-edge correspondence, the mathematical theory of topologically protected interface modes has been established in electronic systems \cite{fefferman2014topologically,fefferman2017topologically,fefferman2016honeycomb_edge,drouot2020edge} and classical wave systems \cite{ammari2024interface,ammari2020topological,lin2021mathematical,lee2019elliptic,Drouot2019TheBC,qiu2024mathematical,qiu2023mathematical,qiu2023bifurcation,li2024interface}; see also the references therein.

In this work, we study the interface modes in a one-dimensional block disordered subwavelength resonator chain. Specifically, we consider the eigenvalue and its associated eigenmode of the capacitance operator, which provides a first-order approximation of the spectral properties of one-dimensional acoustic resonance systems \cite{cbms}. Our first result (Proposition \ref{prop_twofold_quantization}) shows that a two-fold topological characterization of the block disordered resonator chain is available if it is of dominated type (see Definition \ref{def_dominant_system}). The topological index used for the characterization is a generalization of the Zak phase associated with one-dimensional chiral-symmetric Hamiltonians, which has been extensively studied (see, for instance, \cite{graf.shapiro2018BulkEdge,tauber2022chiral_finite_chain,thiang2023bec_inversion}). As a manifestation of the bulk-edge correspondence principle, we prove that the capacitance operator attains a localized interface mode whenever the system consists of two semi-infinite chains with distinct topological characters. We remark that, apart from this work, block disordered systems of subwavelength resonators have been studied in \cite{ammari.thalhammer.ea2025Uniform,ammari.barandun.ea2025Subwavelength,ammari.barandun.ea2025Universal}. Aside from the bulk-edge correspondence approach, we also illustrate our results based on the dynamical perspective (Section \ref{propmatrix}). Various numerical tests are provided to complement our theoretical results.

The remainder of this paper is organized as follows. In Section 2, we present the problem setting and our main theoretical results, including a topological characterization of block disordered subwavelength resonator chain (Proposition \ref{prop_twofold_quantization}) and the existence of topologically protected interface modes (Proposition \ref{thm_interface_mode_finite_chain}). The detailed proofs are presented in Section 4, the required preliminaries of which are listed in Section 3. In Section 5, we illustrate our results from a dynamical perspective (propagation matrix cocycle approach). Finally, in Section 6, we illustrate the potential applicability of our results to quasi-periodic structures: we consider a Fibonacci tiling and show numerically that, even though the Fibonacci tiling does not satisfy the pseudo-ergodicity condition, by concatenating two semi-infinite Fibonacci tilings with different topological characters interface modes occur.

\section{Set-up and main results}

We consider one-dimensional chains of subwavelength resonators consisting of dimer-type blocks, where each block is selected randomly from a library of two elements. The notation is fixed as follows. Denote by $Y_{k}:=[4k,4(k+1))$  the $k$\textsuperscript{th} block. We consider the scenario in which each block contains two resonators of the same length but with possibly different spacings. That is, we let $B_{k}^{\rom{1}/\rom{2}}\subset Y_k$ be the resonators inside the $k$\textsuperscript{th} cell, where
\begin{equation*}
B_{k}^{\rom{1}}:=(4k,4k+1)\cup(4k+2-\delta,4k+3-\delta),\quad B_{k}^{\rom{2}}:=(4k,4k+1)\cup(4k+2+\delta,4k+3+\delta),
\end{equation*}
with $\delta\in (0,1)$ controlling the different spacings of type-$\rom{1}$ and type-$\rom{2}$ blocks. The whole resonator chain is uniquely indexed by a sequence $\omega\in \{\rom{1},\rom{2}\}^{\mathbb{Z}}$: that is, each resonator system is defined as
\begin{equation*}
\ds  \mathcal{D}^{\omega}:=\bigcup_{k\in\mathbb{Z}}B_{k}^{\omega_k}.
\end{equation*}
In particular, we are interested in two cases in which type-$\rom{1}$ or type-$\rom{2}$ resonators are dominant.
\begin{definition} \label{def_dominant_system}
$\mathcal{D}^{\omega}$ is said to be \textit{right $\rom{1}-$dominant} (or \textit{left $\rom{1}-$dominant}, respectively) if
\begin{equation*}
    \lim_{K\to\infty}\frac{C_I(K;\omega)}{K+1}>\frac{1}{2},
\end{equation*}
(or $\lim_{K\to \infty}\frac{C_I(-K;\omega)}{K+1}>\frac{1}{2}$, respectively),
where for $d\in\{\rom{1},\rom{2}\}$
\[
    C_d(K;\omega) \coloneqq\begin{cases}
        \#\{j \mid 0\leq j\leq K, {\omega}_{j}=d\} & \text{if }K\geq0,\\
        \nm
        \#\{j \mid K\leq j< 0, {\omega}_{j}=d\} & \text{if }K<0 .\\
    \end{cases}
\]

In particular, we assume that the above limit exists.
$\mathcal{D}^{\omega}$ is said to be $\rom{1}-$dominant if it is $\rom{1}-$dominant both on the right and on the left. The $\rom{2}-$dominance is defined in a similar way. We say that $\mathcal{D}^{\omega}$ is dominated if it is either $\rom{1}-$ or $\rom{2}-$dominant.
\end{definition}
Apparently, the periodic systems, which correspond to $\omega_k\equiv \rom{1}$ and $\omega_k\equiv \rom{2}$, are dominated. 

In this paper, our main focus is on the spectrum of the capacitance operator associated with the resonator system. We note that the spectrum of capacitance operator provides an accurate approximation of the subwavelength resonant frequencies of coupled infinite systems of Helmholtz equations associated with high contrast-resonators; see, for instance,  \cite{ammari.barandun.ea2025Subwavelength, ammari.barandun.ea2025Universal, ammari.thalhammer.ea2025Uniform,ammari2024interface}. Let $S^{\rom{1}}_{l}=S^{\rom{2}}_{r}=1-\delta$ and $S^{\rom{1}}_{r}=S^{\rom{2}}_{l}=1+\delta$ be the distance between resonators in type-$\rom{1}$ and type-$\rom{2}$ blocks (the subscript $l/r$ refers to left/right, respectively). As shown in \cite{ammari2024interface}, the capacitance operator associated with the dimer-type resonator system $\mathcal{D}^{\omega}$ is defined as the following operator on $\mathcal{H}=\ell^2(\mathbb{Z})\otimes \mathbb{C}^2$:
\begin{equation*}
(\mathcal{C}^{\omega}\psi)_{k}:=
\begin{pmatrix}
0 & 0 \\ 
\nm \ds -\frac{1}{S^{\omega_{k}}_{r}} & 0
\end{pmatrix}\psi_{k+1}
+\begin{pmatrix}
0 & \ds -\frac{1}{S^{\omega_{k-1}}_{r}} \\ \nm 0 & 0
\end{pmatrix}\psi_{k-1}
+\begin{pmatrix}
\ds
\frac{1}{S^{\omega_{k}}_{l}}+\frac{1}{S^{\omega_{k-1}}_{r}} & \ds -\frac{1}{S^{\omega_{k}}_{l}} \\ 
\nm
\ds -\frac{1}{S^{\omega_{k}}_{l}} & \ds \frac{1}{S^{\omega_{k}}_{l}}+\frac{1}{S^{\omega_{k}}_{r}}
\end{pmatrix}\psi_{k} .
\end{equation*}
Clearly, $\mathcal{C}^{\omega}$ is bounded and self-adjoint for any configuration $\omega$. Notice, on the one hand, that $\mathcal{C}^{\omega}$ differs from the famous (disordered) Su–Schrieffer–Heeger model Hamiltonian in condensed matter physics only by a diagonal matrix (cf. \cite{thiang2023bec_inversion,graf2018bec_disorder_chiral,tauber2022chiral_finite_chain,perez2018ssh,inui1994unusual}). On the other hand, it is important to note that the second element of the diagonal matrix is constant for any $\omega$. In fact, we have
\begin{equation*}
\frac{1}{S^{\omega_{k}}_{l}}+\frac{1}{S^{\omega_{k}}_{r}}\equiv \frac{2}{1-\delta^2} .
\end{equation*}
This naturally leads to the following decomposition of $\mathcal{C}^{\omega}$:
\begin{equation} \label{eq_cap_operator_decom}
\mathcal{C}^{\omega}=\tilde{\mathcal{C}}^{\omega}+\frac{2}{1-\delta^2}\cdot \mathbbm{1}_{\mathcal{H}}+\mathcal{E}^{\omega}
\end{equation}
with
\begin{equation} \label{eq_off_diag_operator}
(\tilde{\mathcal{C}}^{\omega}\psi)_{k}:=
\begin{pmatrix}
0 & 0 \\ \nm \ds -\frac{1}{S^{\omega_{k}}_{r}} & 0
\end{pmatrix}\psi_{k+1}
+\begin{pmatrix}
0 & \ds -\frac{1}{S^{\omega_{k-1}}_{r}} \\ \nm  0 & 0
\end{pmatrix}\psi_{k-1}
+\begin{pmatrix}
0 & \ds -\frac{1}{S^{\omega_{k}}_{l}} \\  \nm 
\ds -\frac{1}{S^{\omega_{k}}_{l}} & 0
\end{pmatrix}\psi_{k} ,
\end{equation}
and 
\begin{equation} \label{eq_diag_perturbation}
(\mathcal{E}^{\omega}\psi)_{k}:=\begin{pmatrix}
\ds \frac{1}{S^{\omega_{k}}_{l}}+\frac{1}{S^{\omega_{k-1}}_{r}}-\frac{2}{1-\delta^2} & 0 \\ 
\nm 
0 & 0
\end{pmatrix}\psi_{k} .
\end{equation}
A critical reason for this decomposition is that the off-diagonal part is chiral symmetric, \emph{i.e.}, $\tilde{\mathcal{C}}^{\omega} \Gamma + \Gamma \tilde{\mathcal{C}}^{\omega} =0$ with $\Gamma=\mathbbm{1}\otimes \sigma_{z}$ being the sublattice operator (here $\sigma_{z}$ is the Pauli matrix). As being well-known, this leads to a topological classification of $\tilde{\mathcal{C}}^{\omega}$ (class A\rom{3} of the ten-fold table \cite{kitaev2009periodic_table}) if it is gapped at zero energy. This is verified when our resonator system is dominated.
\begin{proposition} \label{prop_gap_condition}
Suppose that $\mathcal{D}^{\omega}$ is either $\rom{1}-$ or $\rom{2}-$dominant. Then, the off-diagonal operator $\tilde{\mathcal{C}}^{\omega}$ has a spectral gap at $\lambda=0$, \emph{i.e.}, there exists $\Delta>0$ such that $$(-\Delta,\Delta)\cap \text{Spec} (\tilde{\mathcal{C}}^{\omega})=\emptyset.$$ Here, $\text{Spec}$ denotes the spectrum.
\end{proposition}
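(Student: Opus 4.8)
The plan is to use the chiral symmetry to reduce the gap condition to the bounded invertibility of a single scalar bidiagonal operator, and then to read off that invertibility from the hyperbolic behaviour of the associated transfer cocycle, which is exactly what the dominance hypothesis controls.

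First I would pass to the grading $\mathcal{H}=\mathcal{H}_{+}\oplus\mathcal{H}_{-}$ into the $\pm1$ eigenspaces of $\Gamma=\mathbbm{1}\otimes\sigma_{z}$, each canonically identified with $\ell^{2}(\mathbb{Z})$. Since the paper already records that \ct{} anticommutes with $\Gamma$, and every coefficient matrix in \eqref{eq_off_diag_operator} is antidiagonal, \ct{} is off-diagonal in this grading:
\begin{equation*}
\ct=\begin{pmatrix}0&\mathcal{A}^{*}\\ \mathcal{A}&0\end{pmatrix},\qquad (\mathcal{A}u)_{k}=-\Big(\tfrac{1}{S^{\omega_{k}}_{l}}\,u_{k}+\tfrac{1}{S^{\omega_{k}}_{r}}\,u_{k+1}\Big).
\end{equation*}
Because $\ct^{2}=\mathrm{diag}(\mathcal{A}^{*}\mathcal{A},\ \mathcal{A}\mathcal{A}^{*})$, a gap of half-width $\Delta$ at $\lambda=0$ is exactly equivalent to $\mathcal{A}^{*}\mathcal{A}\ge\Delta^{2}$ and $\mathcal{A}\mathcal{A}^{*}\ge\Delta^{2}$, i.e.\ to $\mathcal{A}$ being boundedly invertible with $\|\mathcal{A}u\|\ge\Delta\|u\|$ and $\|\mathcal{A}^{*}u\|\ge\Delta\|u\|$. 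Thus the whole proposition reduces to the invertibility of the bidiagonal operator $\mathcal{A}$.

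Next I would analyse $\mathcal{A}$ through its transfer cocycle. The homogeneous equation $\mathcal{A}u=0$ reads $u_{k+1}=-w_{k}u_{k}$ with $w_{k}:=S^{\omega_{k}}_{r}/S^{\omega_{k}}_{l}$, so that $w_{k}=\kappa:=\tfrac{1+\delta}{1-\delta}>1$ on type-$\rom{1}$ sites and $w_{k}=\kappa^{-1}<1$ on type-$\rom{2}$ sites. Hence the formal solutions grow or decay like $\exp\big(\sum_{j}\log w_{j}\big)$, and over any window $[m,n]$
\begin{equation*}
\sum_{j=m}^{n}\log w_{j}=(\log\kappa)\big(2\,\#\{j\in[m,n]:\omega_{j}=\rom{1}\}-(n-m+1)\big).
\end{equation*}
If $\mathcal{D}^{\omega}$ is $\rom{1}$-dominant, Definition \ref{def_dominant_system} forces $\sum_{j=0}^{k-1}\log w_{j}\to+\infty$ as $k\to+\infty$ and $\sum_{j=k}^{-1}\log w_{j}\to+\infty$ as $k\to-\infty$; equivalently, the homogeneous solution grows at $+\infty$ and decays at $-\infty$, so the cocycle possesses an exponential dichotomy. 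This dichotomy rules out any $\ell^{2}$ kernel of $\mathcal{A}$ (and, read backwards, of $\mathcal{A}^{*}$) and lets me build the resolvent from the one-sided Green's function that sums the decaying homogeneous solution against the source; the $\rom{2}$-dominant case is the mirror image, exchanging the roles of $\mathcal{A}$ and $\mathcal{A}^{*}$. That both half-lines are dominated of the \emph{same} type is precisely what is needed to bound $\mathcal{A}$ and $\mathcal{A}^{*}$ below simultaneously.

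The main obstacle is to upgrade the \emph{asymptotic} density statement of Definition \ref{def_dominant_system} to a \emph{uniform} exponential dichotomy, since only a uniform bound $\sum_{j=m}^{n}\log w_{j}\ge\mu(n-m)-C$ over all windows (not merely in the limit) yields a quantitative gap $\Delta>0$ rather than mere injectivity. I would handle this with the limit-operator (Rabinovich--Roch--Silbermann) calculus: $\mathcal{A}$ is a band operator, so it is invertible iff it has closed range and all its limit operators---the bidiagonal operators built from the coefficient sequences in the $\omega$-hull obtained along shifts $k_{n}\to\pm\infty$---are invertible. Each limit operator is again governed by the scalar cocycle, and the task becomes to show that dominance prevents any hull element from exhibiting either obstruction to a dichotomy: a window of density asymptotically $\tfrac12$ and unbounded length, or a configuration with a semi-infinite type-$\rom{2}$ region abutting a semi-infinite type-$\rom{1}$ region (the domain wall that carries a zero mode, and which will reappear in Proposition \ref{thm_interface_mode_finite_chain} as the interface state). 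Establishing this exclusion, hence the uniform hyperbolicity of the cocycle, is the crux; once it is in place the explicit value of $\Delta$ follows from the dichotomy rate $\mu$ together with the uniform bounds $\kappa^{\pm1}$ on the coefficients.
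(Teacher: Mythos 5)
Your reduction is correct and takes a genuinely different route from the paper. You pass to the chiral grading, write $\tilde{\mathcal{C}}^{\omega}$ as the off-diagonal block operator built from the bidiagonal $\mathcal{A}$ (your formula for $\mathcal{A}$ agrees with \eqref{eq_off_diag_operator}), and correctly observe that a gap of half-width $\Delta$ is equivalent to $\mathcal{A}$ and $\mathcal{A}^{*}$ being bounded below by $\Delta$, which you then tie to uniform hyperbolicity of the scalar cocycle $u_{k+1}=-w_ku_k$. The paper never leaves the single energy $\lambda=0$: it computes the two decoupled formal zero-energy solutions anchored at the origin (Proposition \ref{prop_generalized_mode}, which is essentially your cocycle computation) and invokes subordinacy theory (Theorem \ref{thm_subordinacy}) to conclude $0\notin\text{Spec}(\tilde{\mathcal{C}}^{\omega})$. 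So the elementary core is shared, but the mechanisms for passing from solution asymptotics to a spectral statement are different.

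The genuine gap in your proposal is precisely the step you defer to the end: that dominance excludes every ``bad'' limit operator from the hull, equivalently that the asymptotic density of Definition \ref{def_dominant_system} upgrades to a uniform window estimate $\sum_{j=m}^{n}\log w_j\geq \mu(n-m)-C$. This step is not only unproven in your write-up; it is false under Definition \ref{def_dominant_system} as stated. Take $\omega_k=I$ everywhere except on sparse blocks $[a_n,a_n+n-1]$ (say $a_n=10^{n}$) where $\omega_k=II$. Then $C_I(K;\omega)/(K+1)\to 1$, so $\mathcal{D}^{\omega}$ is $I$-dominant, yet the hull contains the domain-wall configuration ($I$ on the left, $II$ on the right), whose zero-energy cocycle solution decays in \emph{both} directions and hence lies in $\ell^{2}$. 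Concretely, let $u^{(n)}$ follow the homogeneous recursion on the window $[a_n-n,\,a_n+n]$ with $u^{(n)}_{a_n}=1$ and vanish outside; then $|u^{(n)}_{a_n\pm j}|=\kappa^{-j}$, so $\|u^{(n)}\|\geq 1$ while $\mathcal{A}u^{(n)}$ is supported only at the two truncation sites with magnitude $O(\kappa^{-n})$. This Weyl sequence shows $\mathcal{A}$ is not bounded below, i.e.\ $0\in\text{Spec}(\tilde{\mathcal{C}}^{\omega})$ for an $I$-dominant $\omega$. So your program cannot be completed without strengthening ``dominance'' to a uniform density bound over all windows (the uniform-hyperbolicity setting of \cite{ammari.thalhammer.ea2025Uniform}); with that strengthened hypothesis your argument does go through and yields an explicit $\Delta$.

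A final remark worth keeping in mind: your windowed analysis is actually the sharper diagnostic here. The same example satisfies every pointwise conclusion the paper draws at $\lambda=0$ (subordinate solutions exist on each half-line, none on all of $\mathbb{Z}$), which shows that control of the spectral measure at the single energy $\lambda=0$ cannot by itself produce an \emph{open} gap; your approach makes visible exactly which additional uniformity is required.
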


The proof is presented in Section \ref{sec_off_diag_gap}, which relies on the subordinacy theory of Jacobi operators, as recalled in Section \ref{sec_prelim_subordinacy}. 

One way of classifying $\tilde{\mathcal{C}}^{\omega}$ is based on the following tracial index, which generalizes the well-known Zak phase of one-dimensional gapped chiral-symmetric Hamiltonians \cite{graf.shapiro2018BulkEdge,tauber2022chiral_finite_chain,thiang2023bec_inversion,zak}. Since $\mathcal{C}^{\omega}$ is in one-to-one correspondence with the off-diagonal part $\tilde{\mathcal{C}}^{\omega}$, the following index is also associated with $\mathcal{I}(\mathcal{C}^{\omega})$:
\begin{equation} \label{eq_trace_bulk_index}
\mathcal{I}(\mathcal{C}^{\omega})=\mathcal{I}(\tilde{\mathcal{C}}^{\omega}):=\frac{1}{2}\text{tr}(\Gamma \tilde{\Pi}[\Lambda,\tilde{\Pi}]).
\end{equation}
Here, $[\, , \, ]$ denotes the commutator, $\tilde{\Pi}=\tilde{P}_{+}-\tilde{P}_{-}$ with $\tilde{P}_{\pm}:=\mathbbm{1}_{\mathbb{R}^{\pm}}(\tilde{\mathcal{C}}^{\omega})$ being the spectral projections of $\tilde{\mathcal{C}}^{\omega}$ to the particle/hole bands  and $\text{tr}$ denotes the trace. $\Lambda:\mathbb{Z}\to\mathbb{R}$ is a \textit{switch function}, \emph{i.e.}, $\Lambda(n)=1$ ($\Lambda(n)=0$, respectively) for sufficiently large and positive $n$ ($-n$, respectively), which is understood as a multiplication operator in \eqref{eq_trace_bulk_index}. By Proposition \ref{prop_gap_condition}, it is known that the bulk index \eqref{eq_trace_bulk_index} is quantized as integers (a detailed proof can be found in \cite[Section 3]{graf2018bec_disorder_chiral}; see also \cite{tauber2022chiral_finite_chain}).
\begin{proposition}[\cite{graf2018bec_disorder_chiral,tauber2022chiral_finite_chain}] \label{prop_indice_quantization}
Suppose that $\mathcal{D}^{\omega}$ is either $\rom{1}-$ or $\rom{2}-$dominant. Then, $\mathcal{I}(\mathcal{C}^{\omega})\in\mathbb{Z}$.
\end{proposition}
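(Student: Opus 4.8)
The plan is to show that the a priori real number in \eqref{eq_trace_bulk_index} is in fact an integer by identifying it with the Fredholm index of a compression of a unitary to a half-line. The only genuinely new input needed is the spectral gap at $\lambda=0$ furnished by Proposition \ref{prop_gap_condition}; everything downstream is the now-standard machinery for one-dimensional chiral systems in \cite{graf2018bec_disorder_chiral,tauber2022chiral_finite_chain}, so I would organize the argument to reduce to their results after verifying their hypotheses in the present setting.

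First I would control the off-diagonal decay of the flat-band operator $\tilde{\Pi}=\tilde{P}_{+}-\tilde{P}_{-}$. Since $\tilde{\mathcal{C}}^{\omega}$ is a bounded, self-adjoint, nearest-neighbour (hence finite-range) operator on $\mathcal{H}=\ell^2(\Z)\otimes\C^2$, and since $(-\Delta,\Delta)\cap\operatorname{Spec}(\tilde{\mathcal{C}}^{\omega})=\emptyset$ by Proposition \ref{prop_gap_condition}, a Combes–Thomas estimate gives exponential decay of the resolvent $(\tilde{\mathcal{C}}^{\omega}-z)\inv$, uniformly for $z$ on a contour separating the two bands inside the gap. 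Writing each spectral projection as a Riesz projection $\tilde{P}_{\pm}=\frac{1}{2\pi\mathbf{i}}\oint(\zeta-\tilde{\mathcal{C}}^{\omega})\inv\,\dd\zeta$, I obtain that the matrix elements $\langle m|\tilde{\Pi}|n\rangle$ decay exponentially in $|m-n|$. Because the switch function $\Lambda$ is constant outside a finite window, the matrix elements of $[\Lambda,\tilde{\Pi}]$ are, up to the exponential tails, localized near the transition region; summing the off-diagonal decay against this localization shows $[\Lambda,\tilde{\Pi}]$ is trace-class, so the right-hand side of \eqref{eq_trace_bulk_index} is well defined and finite.

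Next I would exploit the chiral symmetry. The gap at $0$ forces $\tilde{P}_{+}+\tilde{P}_{-}=\mathbbm{1}_{\mathcal{H}}$, so $\tilde{\Pi}$ is a self-adjoint unitary, and $\tilde{\mathcal{C}}^{\omega}\Gamma+\Gamma\tilde{\mathcal{C}}^{\omega}=0$ yields $\Gamma\tilde{\Pi}\Gamma=-\tilde{\Pi}$. In the eigenbasis of $\Gamma=\mathbbm{1}\otimes\sigma_{z}$ the operator $\tilde{\Pi}$ is therefore purely off-diagonal, $\tilde{\Pi}=\left(\begin{smallmatrix}0&U^{*}\\U&0\end{smallmatrix}\right)$ with $U$ unitary, and a direct block computation collapses \eqref{eq_trace_bulk_index} to $\mathcal{I}(\tilde{\mathcal{C}}^{\omega})=\tfrac12\big(\tr(U^{*}[\Lambda,U])-\tr(U[\Lambda,U^{*}])\big)=\operatorname{Re}\tr(U^{*}[\Lambda,U])$. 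It then remains to prove $\tr(U^{*}[\Lambda,U])\in\Z$. I would first check that this trace is independent of the admissible switch function: the difference of two switch functions is compactly supported, so changing $\Lambda$ alters the integrand by a commutator of trace-class operators, whose trace vanishes by telescoping. I may thus replace $\Lambda$ by the sharp projection $E=\mathbbm{1}\otimes\mathbbm{1}_{\Z_{\geq0}}$ and invoke the classical identity $\tr(U^{*}[E,U])=-\on{ind}\big(EUE|_{\on{ran}E}\big)$, valid whenever $[E,U]$ is trace-class, which exhibits the quantity as a Fredholm index and hence an integer.

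The main obstacle is the interface between the soft-analytic estimate and the algebraic index identity: one must ensure the Combes–Thomas decay is strong enough (it is, being exponential) that $[\Lambda,\tilde{\Pi}]$, and after the block reduction $[\Lambda,U]$ and $[E,U]$, all lie in the trace ideal, since every cyclicity manipulation and the very definition of the Fredholm index in the last step rest on this. A secondary point requiring care is that $\tilde{P}_{\pm}$ project onto the \emph{open} half-lines $\R^{\pm}$: one has to confirm that no spectral weight sits at $0$, which is exactly what the gap guarantees, so that $\tilde{\Pi}^{2}=\mathbbm{1}_{\mathcal{H}}$ and the block unitary $U$ genuinely exists. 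Once these are in place the integrality follows from the attributed results, and no structure of the capacitance operator beyond its finite range and chiral symmetry is used.
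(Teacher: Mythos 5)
Your proposal is correct and follows essentially the same route as the paper, which for this proposition simply verifies the gap hypothesis (Proposition \ref{prop_gap_condition}) and defers to the standard argument of \cite{graf2018bec_disorder_chiral,tauber2022chiral_finite_chain}: Combes--Thomas decay of the Riesz projections makes $[\Lambda,\tilde{\Pi}]$ trace-class, chiral symmetry reduces the index to $\tr(U^{*}[\Lambda,U])$ for the off-diagonal unitary block $U$ of $\tilde{\Pi}$, and this is an integer as the Fredholm index of the compression $EUE$. Your reconstruction of that machinery, including the switch-function independence and the reality of the trace, is accurate.
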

For our specific model, we further prove that the field of quantization is actually $\mathbb{Z}_2$.
\begin{proposition} \label{prop_twofold_quantization}
Under the assumption of Proposition \ref{prop_indice_quantization}, disordered chains of different dominance are topologically distinct in the sense that
\begin{equation*}
\mathcal{I}(\mathcal{C}^{\omega})
=\left\{
\begin{aligned}
&0 \quad \text{if $\mathcal{D}^{\omega}$ is $\rom{1}-$dominated,} \\
\nm
&1 \quad \text{if $\mathcal{D}^{\omega}$ is $\rom{2}-$dominated.}
\end{aligned}
\right.
\end{equation*}
\end{proposition}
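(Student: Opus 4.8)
The plan is to reduce the abstract bulk index \eqref{eq_trace_bulk_index} to a concrete half-line Fredholm index that can be read off directly from the sequence $\omega$, and then to exploit the bidiagonal structure of the chiral part to force this index into a two-element set whose value is dictated by the dominance. Writing $\psi_{k}=(\psi_{k}^{1},\psi_{k}^{2})$ and collecting the two sublattices, the anticommutation $\tilde{\mathcal{C}}^{\omega}\Gamma+\Gamma\tilde{\mathcal{C}}^{\omega}=0$ puts $\tilde{\mathcal{C}}^{\omega}$ in chiral normal form
\begin{equation*}
\tilde{\mathcal{C}}^{\omega}=\begin{pmatrix} 0 & A \\ A^{*} & 0 \end{pmatrix},\qquad (Av)_{k}=-\frac{1}{S^{\omega_{k}}_{l}}\,v_{k}-\frac{1}{S^{\omega_{k-1}}_{r}}\,v_{k-1},
\end{equation*}
so that $A$ is a lower-bidiagonal operator on $\ell^2(\mathbb{Z})$ whose coefficients take only the two values $\alpha:=\tfrac{1}{1-\delta}$ and $\beta:=\tfrac{1}{1+\delta}$, with $\alpha>\beta>0$. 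By \cref{prop_gap_condition} the operator $A$ is invertible, so its flat-band phase $U:=A(A^{*}A)^{-1/2}$ is unitary and $\tilde{\Pi}$ is off-diagonal with upper block $U$. A direct evaluation of \eqref{eq_trace_bulk_index} in this basis (using $UU^{*}=\mathbbm{1}$, whereby the trace collapses to $\tr(UPU^{*}-P)$-type terms) identifies the bulk index with a Toeplitz (Noether--Gohberg--Krein/Connes) index, $\mathcal{I}(\tilde{\mathcal{C}}^{\omega})=\pm\Ind(PUP)$ for the half-line projection $P:=\mathbbm{1}_{\{k\ge 0\}}$ (acting on each sublattice), the overall sign depending only on the sublattice convention in \eqref{eq_trace_bulk_index} and being fixed below; see \cite{graf2018bec_disorder_chiral,tauber2022chiral_finite_chain}. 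Since $A$ is banded, $[A,P]$ has finite rank, and since $A^{*}A$ is gapped away from $0$ the commutators $[(A^{*}A)^{-s/2},P]$ are compact; hence $s\mapsto P\,A(A^{*}A)^{-s/2}P$ is a norm-continuous family of Fredholm operators and $\Ind(PUP)=\Ind(A_{+})$, where $A_{+}:=PAP$ acts on $\ell^2(\mathbb{Z}_{\ge 0})$.

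It then remains to compute $\Ind(A_{+})$, and this is where the two-fold quantization becomes transparent. Because $A_{+}$ is lower-bidiagonal with nonvanishing diagonal $-1/S^{\omega_{k}}_{l}$, forward substitution from $k=0$ shows $\ker A_{+}=0$. Dually, $A_{+}^{*}$ is upper-bidiagonal and $A_{+}^{*}u=0$ is the first-order recursion $u_{k+1}=-\tfrac{S^{\omega_{k}}_{r}}{S^{\omega_{k}}_{l}}u_{k}$, whose solution space is one-dimensional before imposing $\ell^2$; consequently $\Ind(A_{+})\in\{0,-1\}$, which is exactly the $\mathbb{Z}_{2}$ dichotomy. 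To decide which value occurs, note that $\lvert u_{k}\rvert=\lvert u_{0}\rvert\exp\!\bigl(-\rho\sum_{j=0}^{k-1}\sigma_{j}\bigr)$, where $\rho:=\log\tfrac{1+\delta}{1-\delta}>0$ and $\sigma_{j}:=+1$ if $\omega_{j}=\rom{2}$, $\sigma_{j}:=-1$ if $\omega_{j}=\rom{1}$. Writing $\sum_{j=0}^{K}\sigma_{j}=(K+1)-2\,C_{\rom{1}}(K;\omega)$ and invoking \cref{def_dominant_system}, the partial sums tend to $-\infty$ (resp.\ $+\infty$) linearly in $K$ exactly when $\mathcal{D}^{\omega}$ is $\rom{1}$-dominant (resp.\ $\rom{2}$-dominant). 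Hence $u$ grows exponentially and $\ker A_{+}^{*}=0$ in the $\rom{1}$-dominant case, whereas $u$ decays exponentially, lies in $\ell^2(\mathbb{Z}_{\ge 0})$, and spans $\ker A_{+}^{*}$ in the $\rom{2}$-dominant case. Therefore $\Ind(A_{+})=0$ for $\rom{1}$-dominance and $\Ind(A_{+})=-1$ for $\rom{2}$-dominance, so $\mathcal{I}(\mathcal{C}^{\omega})$ takes the two stated values. The overall normalization is then pinned by the periodic references: for $\omega\equiv\rom{1}$ the Bloch symbol $-(\alpha+\beta e^{i\theta})$ has winding number $0$, while for $\omega\equiv\rom{2}$ the symbol $-(\beta+\alpha e^{i\theta})$ encircles the origin once, fixing the convention so that $\mathcal{I}(\mathcal{C}^{\omega})=0$ for $\rom{1}$-dominance and $\mathcal{I}(\mathcal{C}^{\omega})=1$ for $\rom{2}$-dominance.

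The main obstacle is the first reduction rather than the final recursion: one must justify, in the genuinely aperiodic (non-translation-invariant, non-ergodic) setting, that the trace expression \eqref{eq_trace_bulk_index} coincides with the half-line Fredholm index $\pm\Ind(A_{+})$. Concretely, this requires verifying that $A_{+}$ is Fredholm (from invertibility of $A$ together with the finite-rank, resp.\ compact, nature of the relevant commutators with $P$) and that the flattening homotopy $A\rightsquigarrow U$ leaves the Toeplitz index unchanged; one should also check that the index read off at the right edge is consistent with that at the left edge, which is guaranteed precisely because a dominated chain has the same dominance on both sides and is therefore gapped by \cref{prop_gap_condition}. Once this bulk-to-boundary identification is in place, the remaining content, namely that the scalar recursion governing $\ker A_{+}^{*}$ switches between exponential decay and exponential growth exactly at the $\rom{1}/\rom{2}$ dominance threshold, is elementary, and \cref{def_dominant_system} enters only through the strict inequality that makes $\sum_{j}\sigma_{j}$ diverge linearly.
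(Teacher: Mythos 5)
Your proposal is correct, but it takes a genuinely different route from the paper. The paper argues by adiabatic deformation: it builds a one-parameter family $\tilde{\mathcal{C}}^{\omega}_t$ interpolating between the dominated chain and the periodic reference, shows via the same subordinacy argument as in \cref{prop_gap_condition} that the zero-energy gap stays open along the whole path (\cref{lem_deform}), deduces from trace-class estimates that $t\mapsto\mathcal{I}(\tilde{\mathcal{C}}^{\omega}_t)$ is continuous and hence constant by integrality, and finally quotes the Zak phase of the periodic chain. You instead compute the index directly for every dominated $\omega$: after passing to the chiral normal form with the lower-bidiagonal block $A$ (whose matrix elements you have identified correctly), you convert $\mathcal{I}$ into $\pm\Ind(PAP)$ through the flat-band unitary and a flattening homotopy, and then read off $\ker A_+$ and $\ker A_+^*$ from first-order scalar recursions, with dominance deciding the square-summability of the single formal cokernel vector. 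Your cokernel recursion $u_{k+1}=-\frac{S^{\omega_k}_r}{S^{\omega_k}_l}u_k$ is precisely the first-component zero-mode equation \eqref{eq_generalized_mode_proof_1} underlying \cref{prop_generalized_mode}, so both arguments ultimately consume the same elementary input; what yours buys is a structural explanation of the two-fold quantization ($\ker A_+=0$ always and $\dim\ker A_+^*\le 1$) and a computation of the index with no deformation to a periodic model, while the paper's deformation argument is softer and recycles machinery it needs anyway. Your Fredholm and homotopy steps are sound: $[A,P]$ is finite rank since $A$ is banded, $[(A^*A)^{-s/2},P]$ is compact by exponential localization of functions of a gapped banded operator, and $s\mapsto PA(A^*A)^{-s/2}P$ is norm-continuous, so the index is constant along the flattening. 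The one place where you wave your hands is the overall sign in $\mathcal{I}=\pm\Ind(PUP)$, which you ``fix by convention'' against the periodic case rather than actually evaluating the trace formula for, say, $\omega\equiv\rom{2}$; note, however, that the paper is no more explicit about its periodic reference (it asserts that the Zak phases ``are explicitly calculated for our block model'' without displaying the calculation), so your treatment sits at the same level of rigor on that point.
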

The proof is given in Section \ref{sec_topo_indice}. We note that the main idea of proof follows a standard adiabatic deformation argument. Indeed, for any type-$\rom{1}$ or type-$\rom{2}$ dominated configuration $\omega$, since the zero-energy gap remains open if the dominance is preserved by Proposition \ref{prop_gap_condition} and the topological index is quantized, we can continuously deform $\omega$ to the periodic case, \emph{i.e.}, $\omega_{k}\equiv \rom{1}$ or   $\omega_{k}\equiv \rom{2}$, without changing the index. The proof is then complete by noting that the index \eqref{eq_trace_bulk_index} for periodic structures is equal to the well-known Zak phases \cite{zak}, which are explicitly calculated for our block model.

\begin{remark}
Proposition \ref{prop_twofold_quantization} indicates that a topological phase transition occurs when the configuration $\omega$ is equally distributed.
\end{remark}

The topological nature of the disordered chains is manifested by the appearance of interface modes when we attach two topologically distinct half-chains at the origin. The main result is stated as follows.
\begin{proposition} \label{thm_interface_mode_finite_chain}
Suppose that $\mathcal{D}^{\omega}$ is right $\rom{1}-$dominant and left $\rom{2}-$dominant. Then, we have $\lambda=\frac{2}{1-\delta^2}\in \text{Spec}_{pp}(\mathcal{C}^{\omega})$, with an associated eigenfunction that is localized near the origin. Here, $\text{Spec}_{pp}$ denotes the pure point spectrum.  
\end{proposition}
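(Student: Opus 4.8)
The plan is to exhibit the eigenfunction explicitly and read off its decay from the dominance hypothesis. Writing $\psi_k=(a_k,b_k)^{\top}\in\mathbb{C}^2$ and using the decomposition \eqref{eq_cap_operator_decom}, the constant shift $\frac{2}{1-\delta^2}\mathbbm{1}_{\mathcal H}$ cancels, so the eigenvalue equation $\mathcal{C}^{\omega}\psi=\frac{2}{1-\delta^2}\psi$ is \emph{equivalent} to the zero-mode equation $(\tilde{\mathcal{C}}^{\omega}+\mathcal{E}^{\omega})\psi=0$. Reading off the two sublattice components from \eqref{eq_off_diag_operator} and \eqref{eq_diag_perturbation} gives the system
\begin{align*}
-\frac{1}{S^{\omega_{k-1}}_{r}}b_{k-1}-\frac{1}{S^{\omega_k}_l}b_k+\Big(\frac{1}{S^{\omega_k}_l}+\frac{1}{S^{\omega_{k-1}}_r}-\frac{2}{1-\delta^2}\Big)a_k&=0,\\
-\frac{1}{S^{\omega_k}_r}a_{k+1}-\frac{1}{S^{\omega_k}_l}a_k&=0.
\end{align*}
I would then try the ansatz that the mode lives entirely on the second (B) sublattice, i.e.\ $a_k\equiv0$. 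The second equation then holds trivially, the perturbation $\mathcal{E}^{\omega}$ drops out of the first, and one is left with the single first-order recursion $b_k=-\tfrac{S^{\omega_k}_l}{S^{\omega_{k-1}}_r}b_{k-1}$, which telescopes explicitly once $b_0$ is fixed. (The complementary ansatz $b_k\equiv0$ is sterile: wherever $\omega_k\neq\omega_{k-1}$ the coefficient in the first equation is nonzero and forces $a_k=0$, which then propagates through the $a$-recursion to $a\equiv0$.)

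Next I would show this $b$ is square-summable, which is exactly where dominance is used. Using $S^{\rom{1}}_l=S^{\rom{2}}_r=1-\delta$ and $S^{\rom{1}}_r=S^{\rom{2}}_l=1+\delta$, each step satisfies
\begin{equation*}
\log|b_k|-\log|b_{k-1}|=g\,(1-\chi_k-\chi_{k-1}),\qquad g:=\log\tfrac{1+\delta}{1-\delta}>0,
\end{equation*}
where $\chi_k:=\mathbbm{1}[\omega_k=\rom{1}]$. Summing telescopically, $\log|b_K|-\log|b_0|$ equals $g\,(K-2C_{\rom{1}}(K;\omega))$ up to a bounded boundary correction, whence
\begin{equation*}
\frac{1}{K}\log|b_K|\xrightarrow{K\to\infty}g\Big(1-2\lim_{K\to\infty}\tfrac{C_{\rom{1}}(K;\omega)}{K+1}\Big).
\end{equation*}
Right $\rom{1}$-dominance makes the limiting density exceed $\tfrac12$, so this limit is strictly negative and $|b_K|$ decays exponentially as $K\to+\infty$. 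Running the identical computation to the left gives $\tfrac1K\log|b_{-K}|\to g\big(2\lim_{K}\tfrac{C_{\rom{1}}(-K;\omega)}{K+1}-1\big)$, which is negative precisely because left $\rom{2}$-dominance forces the $\rom{1}$-density on the left below $\tfrac12$. Hence $b$ decays exponentially in both directions, $\psi=(0,b)^{\top}$ is a nonzero element of $\mathcal{H}$, and exhibiting this $\ell^2$ eigenfunction places $\frac{2}{1-\delta^2}$ in $\text{Spec}_{pp}(\mathcal{C}^{\omega})$ with a mode localized near the origin.

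The main obstacle I anticipate is the passage from dominance to square-summability: the hypotheses control only the \emph{asymptotic} densities $\lim\frac{C_{\rom{1}}(\pm K;\omega)}{K+1}$, not any finite truncation, so the decay they produce is only ``eventually exponential'' rather than monotone. The clean way around this is to argue through the Cesàro limit of $\frac1K\log|b_K|$ above: for any rate $c$ strictly below $g(2p-1)$, with $p$ the relevant limiting density, one gets $|b_K|\le e^{-cK}$ for all large $K$, and the finitely many initial terms together with the $O(1)$ boundary corrections are harmless for $\ell^2$-summability since they are bounded. Everything else reduces to the elementary explicit computations above.
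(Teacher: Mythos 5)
Your proposal is correct, and it reaches the conclusion by a genuinely different and more elementary route than the paper. The paper's proof is index-theoretic: it introduces the edge index $\mathcal{I}_{edge}(\tilde{\mathcal{C}})=\text{tr}(\Gamma\tilde{P}_{0})$, invokes the bulk-edge correspondence (Proposition \ref{prop_bec}) together with the two-fold quantization of the bulk indices (Proposition \ref{prop_twofold_quantization}) to conclude $\mathcal{I}_{edge}(\tilde{\mathcal{C}})\neq 0$ and hence the existence of an $\ell^2$ zero mode of $\tilde{\mathcal{C}}^{\omega}$; only then does it use the explicit recursions of Proposition \ref{prop_generalized_mode} to see that any such mode must live on the second sublattice, therefore decouples from $\mathcal{E}^{\omega}$ and lifts to the eigenvalue $\frac{2}{1-\delta^2}$ of $\mathcal{C}^{\omega}$. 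You bypass the topological machinery entirely: your $B$-sublattice recursion $b_k=-\frac{S^{\omega_k}_l}{S^{\omega_{k-1}}_r}b_{k-1}$ is exactly the second component of \eqref{eq_generalized_mode_proof_1}, and you verify square-summability directly from the dominance hypothesis via the Ces\`{a}ro limit of $\frac{1}{K}\log|b_K|$; your treatment of the fact that dominance only controls asymptotic densities (eventual, not monotone, exponential decay) is exactly the right fix. What the paper's route buys is the interpretation of the mode as topologically protected, tied to the $\mathbb{Z}_2$ bulk classification; what your route buys is a self-contained construction with explicit decay rates $\xi^{\varepsilon_{\pm}}$ --- essentially a scalar version of Lemma \ref{lem:propmatdecay} and Remark \ref{rmk:Interface_Mode_Decay} from the propagation-matrix section. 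Your aside ruling out the complementary ansatz $b\equiv 0$ is correct but not needed for the existence claim; it corresponds to the blow-up of the first-component solution established in Proposition \ref{prop_generalized_mode}.
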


The proof is presented in Section \ref{sec_BIC}. We briefly sketch the idea as follows. The first step is to introduce an interface index associated with the off-diagonal part $\tilde{\mathcal{C}}^{\omega}$. This index characterizes the possible interface modes. Due to the chiral symmetry of $\tilde{\mathcal{C}}^{\omega}$, the interface index equals the difference of the bulk indices (of the two media aside the origin) as proved in \cite{graf2018bec_disorder_chiral}. (This equality is the mathematical formulation of the bulk-edge correspondence principle; we refer the interested reader to \cite{avila2013shortrange+transfer,elgart2005shortrange+functional,graf2013shortrange+scattering,Kellendonk02landau+ktheory,ludewig2020shortrange+coarse,EG2002bec_discrete_functional,drouot2024bec_curvedinterfaces,de2016spectral_flow,Bra2019bec_discrete_K,BKR2017bec_discrete_K,AMZ2020bec_discrete_K,kubota2017bec_discrete_K} for more studies on BEC in discrete systems.) Next, by examining the distribution of the eigenmodes of $\tilde{\mathcal{C}}^{\omega}$, it is shown that the eigenmodes decouple with the off-diagonal perturbation $\mathcal{E}^{\omega}$. This concludes the existence of interface modes of the original capacitance operator $\mathcal{C}^{\omega}$.

To illustrate the result in Proposition \ref{thm_interface_mode_finite_chain}, 
in Figure \ref{fig:interface_mode} we compute the spectra of $\mathcal{D}^{\omega}_K := \bigcup_{|k| \leq K} B_{k}^{\omega_k}$ for $K$ large enough that correspond to the two domination regimes and a truncated large resonator system. Depending on the domination regime, the interface mode lies on the lower or upper critical line. The frequency of the interface mode remains unchanged as long as the domination regime of $\omega$ is constant. (Nonetheless, the relative block probability has an influence on the decay rate of the corresponding interface mode.) This is discussed more closely in \cref{rmk:Interface_Mode_Decay} and is illustrated in \cref{fig:Interface_Mode_Decay}. 

\begin{figure}
    \centering
    \includegraphics[width=0.95\linewidth]{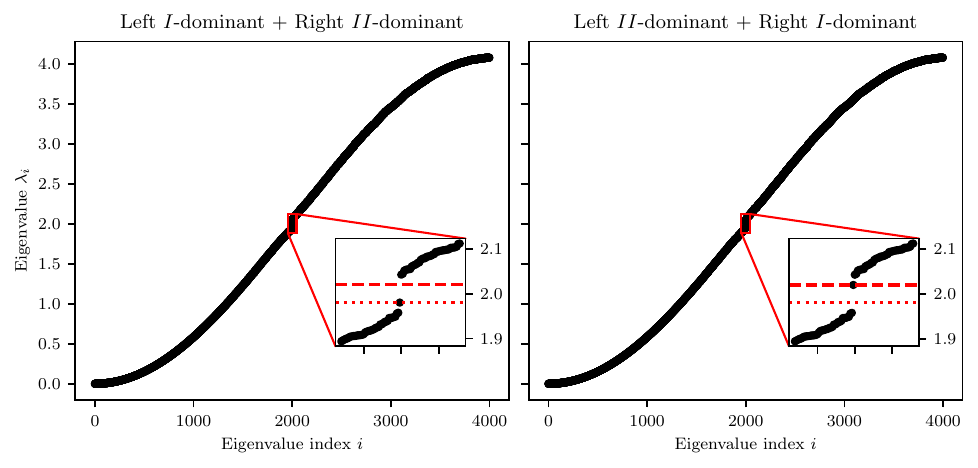}
    \caption{Spectrum of $\mathcal{D}^{\omega}$ for the two domination regimes ($K=2000, \delta=10^{-1}$ in both cases). The highlighted regions contain the interface modes together with the critical lines $\frac{8}{3+\sqrt{1+8\delta^2}}$ (red dots) and $\frac{2}{1-\delta^2}$ (red dashes). We can see that, depending on the domination regime, the interface mode lies on the lower or upper critical line.}
    \label{fig:interface_mode}
\end{figure}



\begin{remark}
The same argument works when $\mathcal{D}^{\omega}$ is left $\rom{1}-$dominant and right $\rom{2}-$dominant. In fact, by the bulk-edge correspondence (Proposition \ref{prop_bec}), one can still find an interface mode associated with the off-diagonal operator $\tilde{\mathcal{C}}^{\omega}$ with zero eigenvalue. However, the constant lifting of the eigenvalue is different from $\frac{2}{1-\delta^2}$ (one can show that it is equal to $\frac{8}{3+\sqrt{1+8\delta^2}}$; the details are left to the interested reader). Nonetheless, the interface modes are still decoupled from the diagonal disorder and preserved in $\text{Spec}(\mathcal{C}^{\omega})$; see Figure \ref{fig:interface_mode} for illustration.
\end{remark}


\begin{remark}
We note that it is difficult to determine the exact number of interface modes due to the disorder. Mathematically, one sees the diagonal perturbation $\mathcal{E}^{\omega}$ is of the same size compared to the spectral gap of the off-diagonal part $\tilde{\mathcal{C}}^{\omega}$, which not only breaks the chiral symmetry, but also leads to the failure of perturbation theory \cite{inui1994unusual,perez2018ssh}.
\end{remark}

\begin{remark}
We remark that the dominance in Definition \ref{def_dominant_system} is necessary to determine the interface modes. For example, the following configuration is apparently not right-dominated:
\begin{equation*}
\omega_k=
\left\{
\begin{aligned}
&1 \quad \text{for } 0\leq k\leq N-1 \text{ or } k=2n\quad (n\geq N), \\
\nm
&0 \quad \text{for } N\leq 2N-1 \text{ or } k=2n+1\quad (n\geq N) .
\end{aligned}
\right.
\end{equation*}
For $N$ being large enough, it is expected that $\lambda=\frac{2}{1-\delta^2}\in \text{Spec}_{pp}(\mathcal{C}^{\omega})$ whose eigenfunction peaks near $k=N$, not at the origin. This implies that $\mathcal{C}^{\omega}$ is not spectrally gapped and has a bulk-localized mode.
\end{remark}

\section{Preliminaries}

\subsection{Subordinacy theory of Jacobi operators} \label{sec_prelim_subordinacy}

We recall some basics of subordinacy theory on the classification of the spectrum of Jacobi operators. These results can be found in \cite{levi2023subordinacy}; see also \cite{koslover2005jacobi,gilbert1987subordinacy,gilbert1989subordinacy,jitomirskaya1996dimensional,christ1998absolutely} and the references therein. We begin with the definition of subordinate solutions.
\begin{definition}[Subordinate solution]
Let $\psi$ be a formal solution of $\tilde{\mathcal{C}}^{\omega}\psi=\lambda\psi$ supported on $\mathbb{Z}^{+}:=\{k\in\mathbb{Z}:k\geq 0\}$. It is said to be \textit{subordinate on $\mathbb{Z}^{+}$} if it is not identically zero and for any other linearly-independent solution $\varphi$ on $\mathbb{Z}^{+}$, it holds that
\begin{equation*}
\lim_{L\to\infty}\frac{\|\psi\cdot\mathbbm{1}_{[0,L]}\|_{\ell^2(\mathbb{Z})\otimes \mathbb{C}^2}}{\|\varphi\cdot\mathbbm{1}_{[0,L]}\|_{\ell^2(\mathbb{Z})\otimes \mathbb{C}^2}}=0.
\end{equation*}
The subordinance on $\mathbb{Z}^{-}:=\{k\in\mathbb{Z}:k\leq 0\}$ is defined similarly. A solution to $\tilde{\mathcal{C}}^{\omega}\psi=\lambda\psi$ supported on $\mathbb{Z}$ is said to be \textit{subordinate} if it is both subordinate on $\mathbb{Z}^{+}$ and on $\mathbb{Z}^{-}$.
\end{definition}

Decompose the spectral measure of $\tilde{\mathcal{C}}^{\omega}$ into singular and absolutely continuous parts (with respect to the Lebesgue measure), \emph{i.e.} $\tilde{\mu}_{s}^{\omega}$ and $\tilde{\mu}_{ac}^{\omega}$, respectively. The following theorem on the characterization of the spectral measure of $\tilde{\mathcal{C}}^{\omega}$ is our main tool to prove Proposition \ref{prop_gap_condition} (cf. \cite[Theorem 1.4]{levi2023subordinacy}). 
\begin{theorem} \label{thm_subordinacy}
The singular part of the spectral measure $\tilde{\mu}_{s}^{\omega}$ is supported on the set
\begin{equation} \label{eq_subordinacy_1}
E_{s}:=\{\lambda\in\mathbb{R}:\, \text{there exists a subordinate solution of $\tilde{\mathcal{C}}^{\omega}\psi=\lambda\psi$ on $\mathbb{Z}$}\}.
\end{equation}
On the other hand, $\tilde{\mu}_{ac}^{\omega}$ is supported on the set $E_{ac}=E_{+}\cup E_{-}$, where
\begin{equation} \label{eq_subordinacy_2}
E_{\pm}:=\{\lambda\in\mathbb{R}:\, \text{there exists no subordinate solution of $\tilde{\mathcal{C}}^{\omega}\psi=\lambda\psi$ on $\mathbb{Z}^{\pm}$}\}.
\end{equation}
\end{theorem}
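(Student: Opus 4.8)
The plan is to derive the statement from the classical Weyl--Titchmarsh $m$-function theory together with Gilbert--Pearson subordinacy, after first reducing the block operator to a scalar Jacobi operator. Writing $\phi_{2k}:=(\psi_k)_1$ and $\phi_{2k+1}:=(\psi_k)_2$ unfolds the $2\times 2$ block structure: a direct computation from \eqref{eq_off_diag_operator} shows that $\tilde{\mathcal{C}}^{\omega}$ is unitarily equivalent (up to an immaterial gauge of signs) to the scalar tridiagonal operator $J$ on $\ell^2(\mathbb{Z})$ with vanishing diagonal and off-diagonal (hopping) sequence $a_{2k}=1/S^{\omega_k}_l$, $a_{2k+1}=1/S^{\omega_k}_r$. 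Since $\delta\in(0,1)$, these hoppings are bounded above and below away from $0$, so $J$ is bounded, self-adjoint, and in the limit-point case at both ends, and a solution $\psi$ of $\tilde{\mathcal{C}}^{\omega}\psi=\lambda\psi$ is subordinate in the sense of the Definition above if and only if the corresponding scalar solution $\phi$ of $J\phi=\lambda\phi$ is subordinate (the interpolation between length $L$ and $2L$ in the truncated norms being irrelevant to the limit). It therefore suffices to prove the analogue for $J$.

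First I would treat each half-line separately. Fixing a boundary condition at the origin, let $u^{D}_\lambda,u^{N}_\lambda$ be the canonical Dirichlet/Neumann solutions of $J\phi=\lambda\phi$ on $\mathbb{Z}^{+}$, and let $m_{+}(z)$ be the Weyl $m$-function, defined so that $u^{D}_z+m_{+}(z)\,u^{N}_z\in\ell^2(\mathbb{Z}^{+})$ for $\mathrm{Im}\,z>0$. Then $m_{+}$ is Herglotz and equals the Borel transform $\int(\lambda-z)^{-1}\dd\mu_{+}(\lambda)$ of the half-line spectral measure $\mu_{+}$, so $(\mu_{+})_{ac}$ has density $\pi^{-1}\mathrm{Im}\,m_{+}(\lambda+i0)$ while $(\mu_{+})_{s}$ is carried by $\{\lambda:\lim_{\epsilon\downarrow0}\mathrm{Im}\,m_{+}(\lambda+i\epsilon)=+\infty\}$. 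The crux is the Jitomirskaya--Last estimate, which I would invoke in the form $|m_{+}(\lambda+i\epsilon)|\asymp\|u^{D}_\lambda\|_{L(\epsilon)}/\|u^{N}_\lambda\|_{L(\epsilon)}$, where $L(\epsilon)$ is fixed implicitly by $\|u^{D}_\lambda\|_{L}\,\|u^{N}_\lambda\|_{L}=(2\epsilon)^{-1}$ and $\|\cdot\|_{L}$ is the continuously interpolated truncated $\ell^2$ norm. This converts the boundary behaviour of $m_{+}$ into subordinacy language: $\lim_{\epsilon\downarrow0}|m_{+}(\lambda+i\epsilon)|=\infty$ exactly when $u^{D}_\lambda$ is subordinate on $\mathbb{Z}^{+}$, and $m_{+}(\lambda+i0)$ exists finite with positive imaginary part exactly when \emph{no} solution is subordinate on $\mathbb{Z}^{+}$. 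Hence $(\mu_{+})_{s}$ lives on $\{\lambda:\exists\text{ subordinate solution on }\mathbb{Z}^{+}\}$ and the essential support of $(\mu_{+})_{ac}$ equals $\{\lambda:\text{no subordinate solution on }\mathbb{Z}^{+}\}$ up to a Lebesgue-null set; the same holds on $\mathbb{Z}^{-}$ with $m_{-}$.

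To pass to the whole line I would assemble $m_{\pm}$ into the $2\times2$ matrix Weyl function $M(z)$ governing the finite-rank decoupling of $\mathbb{Z}$ into $\mathbb{Z}^{\pm}$ at the origin, so that the whole-line spectral measure is comparable to $\mathrm{Im}\,M(\lambda+i0)$. Its absolutely continuous part is non-trivial at $\lambda$ precisely when at least one of $\mathrm{Im}\,m_{\pm}(\lambda+i0)$ is positive, i.e.\ when there is no subordinate solution on $\mathbb{Z}^{+}$ \emph{or} none on $\mathbb{Z}^{-}$; this gives $E_{ac}=E_{+}\cup E_{-}$. For the singular part, $M(\lambda+i0)$ can blow up only where both half-line functions are singular and their boundary values satisfy a gluing relation of the form $m_{+}(\lambda+i0)+m_{-}(\lambda+i0)=0$, which is exactly the condition that the two subordinate half-line solutions are restrictions of a single solution on $\mathbb{Z}$; hence $\mu_{s}^{\omega}$ is supported on $E_{s}=\{\lambda:\exists\text{ subordinate solution on }\mathbb{Z}\}$.

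The main obstacle is the Jitomirskaya--Last asymptotic relating the growth of the $m$-function to the ratio of truncated solution norms: establishing the two-sided bound $|m_{+}(\lambda+i\epsilon)|\asymp\|u^{D}_\lambda\|_{L(\epsilon)}/\|u^{N}_\lambda\|_{L(\epsilon)}$ uniformly as $\epsilon\downarrow0$ requires the variable-length interpolation of the truncated norms and is the technical heart of the argument; everything else is boundary-value theory for Herglotz functions plus the bookkeeping of the half-line gluing. For the application to Proposition \ref{prop_gap_condition} one uses the characterisation in both directions -- the $E_{ac}$ part to exclude absolutely continuous spectrum near $\lambda=0$ and the $E_{s}$ part to exclude singular spectrum there -- so we retain the full statement, which is precisely \cite[Theorem 1.4]{levi2023subordinacy}.
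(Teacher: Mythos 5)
The paper does not prove this theorem: it is imported verbatim from \cite[Theorem 1.4]{levi2023subordinacy}, so there is no internal proof to compare against, only the literature it points to. Your outline is, in substance, the standard Gilbert--Pearson/Jitomirskaya--Last argument that underlies that reference (half-line subordinacy via the two-sided estimate $|m_{+}(\lambda+i\epsilon)|\asymp\|u^{D}_\lambda\|_{L(\epsilon)}/\|u^{N}_\lambda\|_{L(\epsilon)}$, then the whole-line statement via the $2\times2$ matrix Weyl function; cf.\ \cite{gilbert1989subordinacy,jitomirskaya1996dimensional,koslover2005jacobi}), and I find no gap in it. The one genuinely different move you make is the unfolding $\phi_{2k}=(\psi_k)_1$, $\phi_{2k+1}=(\psi_k)_2$: this is legitimate here precisely because the hopping blocks of $\tilde{\mathcal{C}}^{\omega}$ are rank one and the on-site block is purely off-diagonal, so the operator really is a scalar tridiagonal matrix in disguise with hoppings $-1/S^{\omega_k}_l$, $-1/S^{\omega_k}_r$ bounded away from $0$; the cited reference instead works with (block) Jacobi operators directly, which is what one would need if the hopping blocks were invertible $2\times2$ matrices. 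Your reduction buys access to the classical scalar theory at the cost of checking that subordinacy is insensitive to the reindexing, which you correctly dispose of (monotonicity of the truncated norms plus uniform boundedness of the one-step transfer matrices and their inverses). Two small points of care: for the singular part of the whole-line measure the gluing condition must also cover the boundary value $m_{+}(\lambda+i0)=m_{-}(\lambda+i0)=\infty$ (both Dirichlet solutions subordinate), not only the finite relation $m_{+}+m_{-}=0$; and ``supported on'' for $\tilde{\mu}^{\omega}_{ac}$ should be read as essential support modulo Lebesgue-null sets, which is what the $m$-function argument actually delivers and is all that Proposition \ref{prop_gap_condition} requires.
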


\subsection{Some trace-class properties}

We list some basic trace-class properties of tight-binding operators. The proofs of these results are found in \cite{drouot2024bec_curvedinterfaces,qiu2025generalized,Marcelli2019spin_conductivity,elgart2005shortrange+functional}.

\begin{definition}[Tight-binding operator] \label{def_tight_bind}
A bounded operator $A$ on $\ell^2(\mathbb{Z})\otimes \mathbb{C}^2$ is called tight-binding (exponentially localized) if there exists $\alpha,C_{\alpha}>0$ such that the matrix elements of $A$ satisfy $\|A({x}, {y})\|\leq C_{\alpha}e^{-\alpha\|{x}-{y}\|}$ for any ${x}, {y}\in \ell^2(\mathbb{Z})$.
\end{definition}

The following property indicates that the commutator of a tight-binding operator with switch functions is trace-class.
\begin{proposition} \label{prop_commutator_trace_class}
For $A$ being a tight-binding operator and $\Lambda$ being a switch function, the commutator $[A,\Lambda]$ is trace-class.
\end{proposition}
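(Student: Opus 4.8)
The plan is to work directly with the matrix elements and to reduce the trace-class property to an absolutely convergent double sum over the lattice. Writing $\Lambda$ as a diagonal multiplication operator, the $(x,y)$ block of the commutator is
\begin{equation*}
[A,\Lambda](x,y) = A(x,y)\Lambda(y) - \Lambda(x)A(x,y) = \bigl(\Lambda(y)-\Lambda(x)\bigr)A(x,y),
\end{equation*}
so $[A,\Lambda]$ inherits the off-diagonal decay of $A$, weighted by the increments of $\Lambda$. The whole argument then rests on the fact that these increments force enough additional decay to upgrade boundedness to the trace-class property.

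First I would record the elementary trace-norm bound for block operators. Decomposing $B:=[A,\Lambda]$ as $B=\sum_{x,y}B_{x,y}$ with $B_{x,y}:=\lvert x\rangle B(x,y)\langle y\rvert$ a rank-at-most-two operator whose singular values coincide with those of the $2\times 2$ matrix $B(x,y)$, the triangle inequality for the trace norm together with $\norm{B_{x,y}}_1=\norm{B(x,y)}_{\mathrm{tr}}$ yields
\begin{equation*}
\norm{[A,\Lambda]}_1 \le \sum_{x,y\in\Z}\norm{B(x,y)}_{\mathrm{tr}} \le 2\sum_{x,y\in\Z}\abs{\Lambda(y)-\Lambda(x)}\,\norm{A(x,y)},
\end{equation*}
where in the last step I use that, on $2\times2$ matrices, the trace norm is dominated by twice the operator norm. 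It therefore suffices to prove that the right-hand side is finite; since the trace-class operators form a Banach space, absolute summability simultaneously guarantees that the series defining $[A,\Lambda]$ converges in trace norm to the operator it should.

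The remaining estimate is a counting argument. By Definition \ref{def_tight_bind}, $\norm{A(x,y)}\le C_\alpha e^{-\alpha\abs{x-y}}$, and by the definition of a switch function there is $N>0$ with $\Lambda(n)=1$ for $n\ge N$ and $\Lambda(n)=0$ for $n\le -N$, with $\Lambda$ bounded throughout. Passing to the variable $d=y-x$ and summing first over $x$, the increment $\Lambda(x+d)-\Lambda(x)$ vanishes whenever $x$ and $x+d$ lie in the same constant region of $\Lambda$, so it can be nonzero only for $x$ in an interval of length $O(N+\abs{d})$; hence
\begin{equation*}
\sum_{x\in\Z}\abs{\Lambda(x+d)-\Lambda(x)} \le C\,(N+\abs{d}).
\end{equation*}
Combining this with the exponential factor gives
\begin{equation*}
\sum_{x,y\in\Z}\abs{\Lambda(y)-\Lambda(x)}\,\norm{A(x,y)} \le C_\alpha\,C\sum_{d\in\Z} e^{-\alpha\abs{d}}\,(N+\abs{d}) < \infty,
\end{equation*}
since a geometric series against a linear weight converges. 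This closes the argument.

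The only genuinely delicate point is the operator-theoretic reduction in the first step: one must justify that the rank-two block decomposition converges to $[A,\Lambda]$ and that its trace norm is controlled by the sum of the block trace norms. I would handle this by truncating the sum to a finite box, where the triangle inequality is unproblematic, and then passing to the limit using completeness of the trace ideal together with the absolute summability established above. The lattice counting and the geometric-series estimate are routine once this reduction is in place.
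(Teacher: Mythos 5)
Your proposal is correct: the identity $[A,\Lambda](x,y)=(\Lambda(y)-\Lambda(x))A(x,y)$, the block-wise trace-norm bound, and the counting of lattice sites where the switch increment is nonzero (an interval of length $O(N+\abs{d})$ for fixed $d=y-x$) combine with the exponential decay of $A$ exactly as you describe, and your handling of the convergence of the block decomposition via completeness of the trace ideal is sound. The paper does not prove Proposition \ref{prop_commutator_trace_class} itself but defers to the cited references, and your argument is essentially the standard one found there, so there is no substantive divergence to report.
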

As shown in the following proposition, another important property is the continuity of the trace under multiplication of a strongly convergent sequence. 
\begin{proposition} \label{prop_strong_continuity_trace}
Suppose that $A$ is trace-class and $B_n$ converges strongly to zero. Then, $\text{tr}(AB_n) \rightarrow 0$ as $n \rightarrow +\infty$.
\end{proposition}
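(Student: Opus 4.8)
The plan is to reduce the statement to a finite-rank operator, where strong convergence can be exploited termwise, and to control the remainder uniformly in $n$ using the trace norm. First I would record that a strongly convergent sequence of bounded operators is uniformly bounded: by the Banach--Steinhaus theorem, $M := \sup_n \norm{B_n} < \infty$. Since $A$ is trace-class and each $B_n$ is bounded, every product $AB_n$ is trace-class, so $\text{tr}(AB_n)$ is well defined and the assertion is meaningful.

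Next I would invoke the singular value decomposition $A = \sum_{j\geq 1} s_j \ip{\cdot}{\phi_j}\psi_j$, where $\{\phi_j\}$ and $\{\psi_j\}$ are orthonormal systems and $\sum_j s_j = \norm{A}_1 < \infty$. Fixing $\epsilon > 0$, I set $A_N := \sum_{j=1}^N s_j \ip{\cdot}{\phi_j}\psi_j$; choosing $N$ large makes the trace-norm tail satisfy $\norm{A - A_N}_1 = \sum_{j>N} s_j < \epsilon$. I then split
\[
\text{tr}(AB_n) = \text{tr}(A_N B_n) + \text{tr}\bigl((A-A_N)B_n\bigr).
\]
For the remainder, the H\"older-type inequality for Schatten norms gives $\abs{\text{tr}((A-A_N)B_n)} \leq \norm{(A-A_N)B_n}_1 \leq \norm{A-A_N}_1\,\norm{B_n} \leq \epsilon M$, and this bound is uniform in $n$.

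For the finite-rank part, a short computation of the trace of a rank-one operator composed on the right with $B_n$ gives $\text{tr}(A_N B_n) = \sum_{j=1}^N s_j \ip{B_n\psi_j}{\phi_j}$. Since the sum is finite and $B_n\psi_j \to 0$ in norm for each fixed $j$ by strong convergence, every inner product tends to zero, so $\text{tr}(A_N B_n) \to 0$ as $n\to\infty$. Combining the two estimates yields $\limsup_n \abs{\text{tr}(AB_n)} \leq \epsilon M$, and letting $\epsilon \to 0$ completes the argument.

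The routine ingredients here are standard; the only points demanding care are the uniform bound from Banach--Steinhaus (without which the remainder estimate collapses) and the exchange of the infinite trace with its finite truncation, which is exactly why the reduction to finite rank is needed rather than a direct termwise argument on all of $A$. I expect the main conceptual obstacle to be verifying that strong convergence alone -- rather than norm or weak convergence -- suffices; this succeeds precisely because the finite-rank reduction turns the trace into a finite sum of inner products $\ip{B_n\psi_j}{\phi_j}$, each controlled by $\norm{B_n\psi_j}$, to which strong convergence applies directly.
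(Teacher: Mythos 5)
Your proof is correct and complete: the uniform bound via Banach--Steinhaus, the finite-rank truncation of the singular value decomposition with trace-norm control of the tail, and the rank-one computation $\tr\bigl(\langle \cdot,\phi\rangle\psi\, B_n\bigr)=\langle B_n\psi,\phi\rangle$ are exactly the ingredients needed, and each step is justified. The paper itself offers no proof of this proposition -- it simply cites the literature -- and the argument you give is the standard one found in those references, so there is nothing to reconcile between your route and the paper's.
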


\section{Topological classification of dominated disordered chains} 

\subsection{Zero-energy gap of off-diagonal operators: Proof of Proposition \ref{prop_gap_condition}}
\label{sec_off_diag_gap}

By Theorem \ref{thm_subordinacy}, it is sufficient to prove that $E_{s}=E_{\pm}$ does not contain $\lambda=0$. Then the existence of a spectral gap follows from the closedness of the spectrum. The key issue is to calculate the generalized eigenmode at $\lambda=0$.
\begin{proposition} \label{prop_generalized_mode}
Suppose that $\mathcal{D}^{\omega}$ is $\rom{1}-$dominant. Then, $\tilde{C}^{\omega}\psi=0$ admits two linearly independent formal solutions $\psi_1$ and $\psi_2$, which are supported on the first and second resonators of the dimer block, respectively,
\begin{equation} \label{eq_generalized_mode_1}
{e}_2\cdot(\psi_1)_{k}= {e}_1\cdot(\psi_2)_{k}=0.
\end{equation}
Moreover, $\psi_1$ and $\psi_2$ admit opposite asymptotics:
\begin{equation} \label{eq_generalized_mode_2}
\lim_{k\to\infty}\|(\psi_1)_{k}\|=\infty,\quad \lim_{k\to -\infty}\|(\psi_1)_{k}\|=0,
\end{equation}
\begin{equation} \label{eq_generalized_mode_3}
\lim_{k\to\infty}\|(\psi_2)_{k}\|=0,\quad \lim_{k\to -\infty}\|(\psi_2)_{k}\|=\infty .
\end{equation}
Reversely, when $\mathcal{D}^{\omega}$ is $\rom{2}-$dominant, the asymptotics of the two formal solutions $\psi_1$ and $\psi_2$ satisfying \eqref{eq_generalized_mode_1} are
\begin{equation} \label{eq_generalized_mode_4}
\lim_{k\to\infty}\|(\psi_1)_{k}\|=0,\quad \lim_{k\to -\infty}\|(\psi_1)_{k}\|=\infty ,
\end{equation}
and 
\begin{equation} \label{eq_generalized_mode_5}
\lim_{k\to\infty}\|(\psi_2)_{k}\|=\infty,\quad \lim_{k\to -\infty}\|(\psi_2)_{k}\|=0.
\end{equation}
\end{proposition}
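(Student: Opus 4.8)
The plan is to work entirely at the level of the scalar recursions obtained by writing out $\tilde{\mathcal{C}}^{\omega}\psi = 0$ componentwise. Writing $\psi_k = (a_k, b_k)^{\top} \in \mathbb{C}^2$, the first and second components of $(\tilde{\mathcal{C}}^{\omega}\psi)_k = 0$ read
\begin{equation*}
\frac{1}{S^{\omega_{k-1}}_{r}}\,b_{k-1} + \frac{1}{S^{\omega_{k}}_{l}}\,b_{k} = 0, \qquad \frac{1}{S^{\omega_{k}}_{r}}\,a_{k+1} + \frac{1}{S^{\omega_{k}}_{l}}\,a_{k} = 0 .
\end{equation*}
The crucial structural observation is that at $\lambda = 0$ the off-diagonal blocks are rank one, so these two equations decouple: the first constrains only the sequence $(b_k)$ and the second only $(a_k)$. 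Each is a first-order recursion on $\mathbb{Z}$, hence has a one-dimensional solution space, and together they span a two-dimensional space of formal solutions. I would then define $\psi_1$ by taking $b \equiv 0$ with $a_0 \neq 0$, and $\psi_2$ by taking $a \equiv 0$ with $b_0 \neq 0$. This immediately yields the support property \eqref{eq_generalized_mode_1} and the linear independence of $\psi_1$ and $\psi_2$.

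Next I would solve the two recursions explicitly and read off the growth rates. For $\psi_1$ the recursion $a_{k+1} = -\tfrac{S^{\omega_k}_r}{S^{\omega_k}_l} a_k$ gives, for $k \geq 1$,
\begin{equation*}
|a_k| = |a_0| \prod_{j=0}^{k-1} \frac{S^{\omega_j}_r}{S^{\omega_j}_l} = |a_0| \left(\frac{1+\delta}{1-\delta}\right)^{C_{\rom{1}}(k-1;\omega) - C_{\rom{2}}(k-1;\omega)},
\end{equation*}
since $S^{\omega_j}_r / S^{\omega_j}_l$ equals $\tfrac{1+\delta}{1-\delta}$ on type-$\rom{1}$ blocks and its reciprocal on type-$\rom{2}$ blocks. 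The recursion for $\psi_2$, namely $b_k = -\tfrac{S^{\omega_k}_l}{S^{\omega_{k-1}}_r} b_{k-1}$, telescopes similarly; after separating the numerator and denominator products and absorbing the two endpoint factors into a bounded multiplicative constant, one obtains $|b_k| \asymp |b_0| \left(\tfrac{1-\delta}{1+\delta}\right)^{C_{\rom{1}}(k-1;\omega)-C_{\rom{2}}(k-1;\omega)}$, that is, exactly the reciprocal base. The backward ($k \to -\infty$) asymptotics follow by iterating the same recursions in the reverse direction, which replaces the counting functions $C_{d}(k-1;\omega)$ by $C_{d}(k;\omega)$ with $k<0$.

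Finally I would convert the dominance hypothesis into divergence of the exponent. If $\mathcal{D}^{\omega}$ is $\rom{1}$-dominant on the right, then $\lim_{K\to\infty} C_{\rom{1}}(K;\omega)/(K+1) = c > \tfrac12$, and since $C_{\rom{1}} + C_{\rom{2}} = K+1$ we get
\begin{equation*}
C_{\rom{1}}(K;\omega) - C_{\rom{2}}(K;\omega) = 2C_{\rom{1}}(K;\omega) - (K+1) = (K+1)\bigl(2c-1+o(1)\bigr) \longrightarrow +\infty .
\end{equation*}
Because $\tfrac{1+\delta}{1-\delta} > 1$, this forces $|a_k| \to \infty$ and $|b_k| \to 0$ as $k \to \infty$; the analogous computation on the left (using left $\rom{1}$-dominance) gives $|a_{-m}| \to 0$ and $|b_{-m}| \to \infty$, which is exactly \eqref{eq_generalized_mode_2}--\eqref{eq_generalized_mode_3}. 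The $\rom{2}$-dominant case \eqref{eq_generalized_mode_4}--\eqref{eq_generalized_mode_5} is identical after swapping the roles of $C_{\rom{1}}$ and $C_{\rom{2}}$, which sends every exponent to its negative. The computation is elementary once the decoupling is in hand; the only points requiring care are the bounded boundary correction in the telescoping product for $\psi_2$, and checking that the dominance limit being strictly above $\tfrac12$ (rather than mere eventual majorization) is what upgrades ``eventually positive'' to genuine divergence of the exponent, as the unbounded growth and decay statements require.
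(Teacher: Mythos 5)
Your proposal is correct and follows essentially the same route as the paper: decouple the two components into first-order scalar recursions, telescope the products to express the moduli as $\bigl(\tfrac{1+\delta}{1-\delta}\bigr)^{\pm(C_{\rom{1}}-C_{\rom{2}})}$ up to bounded factors, and use the strict dominance limit to force the exponent to diverge. The paper handles the shifted index in the second-component recursion via a diagonal substitution $\phi_k=\operatorname{diag}(1,1/S^{\omega_k}_r)\psi_k$ rather than your absorption of endpoint factors into a bounded constant, but this is only a cosmetic difference.
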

\begin{proof}
We only prove the claim for the $\rom{1}-$dominated case. First of all, the existence of two linearly independent generalized eigenmodes of $\tilde{C}^{\omega}\psi=0$ follows from the fact that it is a second-order difference equation and is Hermitian. Now, suppose that $\psi$ is a generalized eigenmode. The equation $\tilde{C}^{\omega}\psi=0$ can be written elementwisely as
\begin{equation} \label{eq_generalized_mode_proof_1}
\psi_{k+1}^{(1)}=-\frac{S^{\omega_k}_{r}}{S^{\omega_k}_{l}}\psi_{k}^{(1)},\quad
\psi_{k}^{(2)}=-\frac{S^{\omega_{k}}_{l}}{S^{\omega_{k-1}}_{r}}\psi_{k-1}^{(2)} .
\end{equation}
This shows that the first and second components of $\psi$ are decoupled and implies \eqref{eq_generalized_mode_1}. To study the asymptotics at infinity, we perform the substitution
\begin{equation*}
\phi_k:=
\begin{pmatrix}
1 & 0 \\ \nm 0 & \ds \frac{1}{S^{\omega_k}_{r}}
\end{pmatrix}\psi_k .
\end{equation*}
This does not change the asymptotics at infinity, since $S^{\omega_k}_{r}\in \{1-\delta,1+\delta\}$ is uniformly bounded. Moreover, it is easy to check that $\phi_k$ satisfies the recurrence relation
\begin{equation} \label{eq_generalized_mode_proof_2}
\ds \phi_{k+1}^{(1)}=-\frac{S^{\omega_k}_{r}}{S^{\omega_k}_{l}}\phi_{k}^{(1)},\quad
\phi_{k+1}^{(2)}=-\frac{S^{\omega_{k+1}}_{l}}{S^{\omega_{k+1}}_{r}}\phi_{k}^{(2)} . 
\end{equation}
The advantage of employing this substitution is seen by comparing \eqref{eq_generalized_mode_proof_2} with \eqref{eq_generalized_mode_proof_1}. In fact, the dynamics of the second component is controlled solely by $w_k$ instead of the tuple $(w_k,w_{k-1})$; this greatly simplifies our analysis. 

Now, we prove \eqref{eq_generalized_mode_2}. Denote by $$\ds q_{\omega_{k}}:=-\frac{S^{\omega_k}_{r}}{S^{\omega_k}_{l}}.$$ Then, by definition, 
$$\ds q_{\rom{1}}=-\frac{1+\delta}{1-\delta}=(q_{\rom{2}})^{-1}.$$ By  assumption  (2), $\mathcal{D}^{\omega}$ is $\rom{1}-$dominant. We select $K^{\prime}>0$ so that, for all $K\geq K^{\prime}$,
\begin{equation*}
C_{\rom{1}}(K;\omega)-C_{\rom{2}}(K;\omega)>cK,
\end{equation*}
and
\begin{equation*}
C_{\rom{1}}(-K;\omega)-C_{\rom{2}}(-K;\omega)>cK,
\end{equation*}
for some $c>0$. This implies
\begin{equation*}
\ds |\phi_{K+1}^{(1)}|=|q_{\rom{1}}^{C_{\rom{1}(K;\omega)}}q_{\rom{2}}^{C_{\rom{2}(K;\omega)}}\phi_{0}^{(1)}|
\geq \big(\frac{1+\delta}{1-\delta}\big)^{cK}|\phi_{0}^{(1)}|,
\end{equation*}
and similarly
\begin{equation*}
\ds |\phi_{-K-1}^{(1)}|\leq \big(\frac{1+\delta}{1-\delta}\big)^{-cK}|\phi_{0}^{(1)}|.
\end{equation*}
Then \eqref{eq_generalized_mode_2} holds.

Following the same lines and using \eqref{eq_generalized_mode_proof_2}, we can check that the asymptotic behaviour of $\psi_2$ is reciprocal to that of $\psi_1$. This proves \eqref{eq_generalized_mode_3}.
\end{proof}

With Proposition \ref{prop_generalized_mode}, we are now ready to prove Proposition \ref{prop_gap_condition}; we will only consider the case that $\mathcal{D}^{\omega}$ is $\rom{1}-$dominated, and the proof of the $\rom{2}-$dominated case is similar. First, we observe that $\psi_2$ is a subordinate solution of $\tilde{\mathcal{C}}^{\omega}\psi=0$ on $\mathbb{Z}^{+}$ by Proposition \ref{prop_generalized_mode}. Hence, the set $E_{+}$ (defined in \eqref{eq_subordinacy_2}) is empty. Similarly, we see that $E_{-}=\emptyset$. By Theorem \ref{thm_subordinacy}, we know that the absolutely continuous spectrum of $\tilde{\mathcal{C}}$ is empty. On the other hand, the set $E_{s}$ is also empty. In fact, if $\psi$ is a subordinate solution of $\tilde{\mathcal{C}}^{\omega}\psi=0$ on $\mathbb{Z}$, it necessarily decays to zero as $k\to\infty$, and hence has a vanishing first component at each $k$ by Proposition \ref{prop_generalized_mode}; but then one sees that $\psi$ blows up as $k\to -\infty$ if it is not identically zero, leading to a contradiction. Hence, we conclude that $0\notin \text{Spec}(\tilde{\mathcal{C}}^{\omega})$.


\subsection{Bulk indices of gapped disordered chains: Proof of Proposition \ref{prop_twofold_quantization}} \label{sec_topo_indice}

We prove Proposition \ref{prop_twofold_quantization} for the case that $\mathcal{D}^{\omega}$ is $\rom{1}-$dominated in detail. The proof for the $\rom{2}-$dominated is sketched at the end of this subsection.

First, we claim that we can deform a $\rom{1}-$dominated structure by turning the type-$\rom{2}$ resonator into type-$\rom{1}$ without closing the spectral gap. The following result holds. 
\begin{lemma} \label{lem_deform}
Suppose that $\mathcal{D}^{\omega}$ is $\rom{1}-$dominated. We define the family of lengths
\begin{equation*}
S^{\omega_k}_{r,t}=
\left\{
\begin{aligned}
& S^{\omega_k}_{r},\quad \omega_{k}= \rom{1}, \\
& S^{\rom{2}}_{r}+2t\delta,\quad \omega_{k}= \rom{2}, \\
\end{aligned}
\right. \quad
S^{\omega_k}_{l,t}=
\left\{
\begin{aligned}
& S^{\omega_k}_{l},\quad \omega_{k}= \rom{1}, \\
& S^{\rom{2}}_{l}-2t\delta,\quad \omega_{k}= \rom{2}, \\
\end{aligned}
\right. \quad (0\leq t\leq 1). 
\end{equation*}
Define the operator $\tilde{\mathcal{C}}^{\omega}_{t}$ similarly as in \eqref{eq_off_diag_operator} with the new lengths $(S^{\omega_k}_{r,t},S^{\omega_k}_{l,t})_{k\in\mathbb{Z}}$. Then, there exists $\Delta>0$ such that $(-\Delta,\Delta)\cap \text{Spec}(\tilde{\mathcal{C}}^{\omega}_{t})=\emptyset$ for any $t\in [0,1]$.
\end{lemma}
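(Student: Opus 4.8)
The plan is to follow the same route as in the proof of Proposition \ref{prop_gap_condition}: for each fixed $t$, use subordinacy theory (Theorem \ref{thm_subordinacy}) to show that $0\notin\text{Spec}(\tilde{\mathcal{C}}^{\omega}_{t})$ by analysing the zero-energy generalized modes, and then upgrade this pointwise gap to a uniform gap $\Delta>0$ on $[0,1]$ by a compactness argument. Two observations make the deformation harmless. First, the map $t\mapsto\tilde{\mathcal{C}}^{\omega}_{t}$ is norm-continuous and each operator remains bounded self-adjoint of the chiral off-diagonal form \eqref{eq_off_diag_operator}. Second, the deformed type-$\rom{2}$ lengths $S^{\rom{2}}_{r,t}=1+(2t-1)\delta$ and $S^{\rom{2}}_{l,t}=1-(2t-1)\delta$ stay strictly positive and uniformly bounded away from $0$, so the recurrence analysis of Proposition \ref{prop_generalized_mode} applies verbatim to $\tilde{\mathcal{C}}^{\omega}_{t}$.

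For the pointwise gap, fix $t\in[0,1]$ and set $e^{\omega_k}_{t}:=-S^{\omega_k}_{r,t}/S^{\omega_k}_{l,t}$. As in \eqref{eq_generalized_mode_proof_1}, the two zero modes decouple into first- and second-component solutions whose growth is governed by the products $\prod_k e^{\omega_k}_{t}$ and $\prod_k (e^{\omega_k}_{t})^{-1}$. Writing $r:=\frac{1+\delta}{1-\delta}>1$, we have $|e^{\rom{1}}_{t}|=r$ for every $t$, while $|e^{\rom{2}}_{t}|=\frac{1+(2t-1)\delta}{1-(2t-1)\delta}$ is increasing in $t$, equal to $1/r$ at $t=0$, to $1$ at $t=\tfrac12$, and to $r$ at $t=1$; in particular $|e^{\rom{2}}_{t}|\geq 1/r$ for all $t\in[0,1]$. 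Consequently, for $K\geq0$,
\begin{equation*}
\log\Big|\frac{(\psi_1)^{(1)}_{K+1}}{(\psi_1)^{(1)}_{0}}\Big|
= C_{\rom{1}}(K;\omega)\log r + C_{\rom{2}}(K;\omega)\log|e^{\rom{2}}_{t}|
\geq \big(C_{\rom{1}}(K;\omega)-C_{\rom{2}}(K;\omega)\big)\log r .
\end{equation*}
By $\rom{1}$-dominance the right-hand side is bounded below by $cK\log r$ for some $c>0$ and all large $K$, exactly as at $t=0$. The same estimate on $\mathbb{Z}^{-}$ and the reciprocal computation for the second-component mode show that $\psi_1$ blows up at $+\infty$ and decays at $-\infty$, while $\psi_2$ does the opposite, uniformly in $t$. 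The subordinacy argument of Section \ref{sec_off_diag_gap} then yields $E_{\pm}=E_{s}=\emptyset$ at $\lambda=0$, whence $0\notin\text{Spec}(\tilde{\mathcal{C}}^{\omega}_{t})$.

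It remains to make the gap uniform in $t$. Since the entries $-1/S^{\omega_k}_{r,t}$ and $-1/S^{\omega_k}_{l,t}$ are Lipschitz functions of $t$ with constants independent of $k$ (the denominators lie in $[1-\delta,1+\delta]$) and the operator is banded, $t\mapsto\tilde{\mathcal{C}}^{\omega}_{t}$ is norm-Lipschitz on $[0,1]$. Each $\tilde{\mathcal{C}}^{\omega}_{t}$ is self-adjoint and, by the previous paragraph, invertible, so inversion being continuous on the open set of invertibles gives that $t\mapsto(\tilde{\mathcal{C}}^{\omega}_{t})^{-1}$ is norm-continuous on the compact interval $[0,1]$; hence $\sup_{t}\|(\tilde{\mathcal{C}}^{\omega}_{t})^{-1}\|=:M<\infty$. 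For self-adjoint operators $\text{dist}(0,\text{Spec}(\tilde{\mathcal{C}}^{\omega}_{t}))=\|(\tilde{\mathcal{C}}^{\omega}_{t})^{-1}\|^{-1}\geq 1/M$, so the claim holds with $\Delta:=1/M$.

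The main obstacle I anticipate is the critical parameter $t=\tfrac12$, at which the deformed type-$\rom{2}$ block becomes balanced ($|e^{\rom{2}}_{1/2}|=1$) and contributes neither growth nor decay; one must verify that the gap does not close there. This is precisely where $\rom{1}$-dominance is essential: the type-$\rom{1}$ blocks alone force the geometric growth of $\psi_1$, and since $|e^{\rom{2}}_{t}|\geq 1/r$ throughout, the worst case for the growth lower bound is in fact the undeformed endpoint $t=0$, already treated in Proposition \ref{prop_gap_condition}. Thus a single constant $c>0$ works for all $t$, and no additional care is needed at $t=\tfrac12$.
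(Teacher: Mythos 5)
Your proof is correct and follows essentially the same route as the paper: re-run the zero-energy generalized-mode and subordinacy analysis of Proposition \ref{prop_gap_condition} for each fixed $t$, then upgrade to a uniform gap by norm-continuity of $t\mapsto\tilde{\mathcal{C}}^{\omega}_{t}$ and compactness of $[0,1]$. You in fact supply more detail than the paper at the step it dismisses with ``we can directly check,'' namely the quantitative observation that $|e^{\rom{2}}_{t}|\geq 1/r$ for all $t$ so that the $t=0$ endpoint is the worst case for the growth estimate.
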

\begin{proof}
The idea is the same as Proposition \ref{prop_gap_condition}. For any $t\in [0,1]$, we can directly check that the generalized eigenmodes of $\tilde{\mathcal{C}}^{\omega}_{t}\psi=0$ still satisfy the properties listed in Proposition \ref{prop_generalized_mode}. This is because the first and second components of the generalized modes stay decoupled when we range $t$ (check it by \eqref{eq_generalized_mode_proof_2}). Then, we conclude that $0\notin \text{Spec}(\tilde{\mathcal{C}}^{\omega}_{t})$ for $t\in [0,1]$ recalling the subordinacy theory. Finally, we note that the gap at any $t_0\in [0,1]$ persists for $|t-t_0|$ being small by the perturbation theory (recall that $\tilde{\mathcal{C}}^{\omega}_{t}$ depends continuously on $t$). Hence, we see that a spectral gap exists globally by a standard compactness argument applied on $t\in [0,1]$.
\end{proof}

Now, we prove Proposition \ref{prop_twofold_quantization}. Analogously to \eqref{eq_trace_bulk_index}, we define
\begin{equation*}
\mathcal{I}(\tilde{\mathcal{C}}^{\omega}_{t}):=\frac{1}{2}\text{tr}(\Gamma \tilde{\Pi}_{t}[\Lambda,\tilde{\Pi}_{t}])
\end{equation*}
with $\tilde{\Pi}_{t}=\tilde{P}_{+,t}-\tilde{P}_{-,t}$ with $\tilde{P}_{\pm,t}:=\mathbbm{1}_{\mathbb{R}^{\pm}}(\tilde{\mathcal{C}}^{\omega}_{t})$. It is now sufficient to prove the following statement:
\begin{equation} \label{eq_twofold_quantization_proof_1}
\mathcal{I}(\tilde{\mathcal{C}}^{\omega}_{t})-\mathcal{I}(\tilde{\mathcal{C}}^{\omega}_{t^{\prime}})=0 \quad \text{for any $t,t^{\prime}\in[0,1]$ being sufficiently close.}
\end{equation}
Denoting $\tilde{\Pi}_{t,t^{\prime}}=\tilde{\Pi}_{t}-\tilde{\Pi}_{t^{\prime}}$, we rewrite the above difference as
\begin{equation} \label{eq_twofold_quantization_proof_2}
\mathcal{I}(\tilde{\mathcal{C}}^{\omega}_{t})-\mathcal{I}(\tilde{\mathcal{C}}^{\omega}_{t^{\prime}})
=\frac{1}{2}\text{tr}(\Gamma \tilde{\Pi}_{t,t^{\prime}}[\Lambda,\tilde{\Pi}_{t}])+\frac{1}{2}\text{tr}(\Gamma \tilde{\Pi}_{t}[\Lambda,\tilde{\Pi}_{t,t^{\prime}}]).
\end{equation}
Recall that 1) $[\Lambda,\tilde{\Pi}_{t}]$ is trace-class since the spectral projection $\tilde{\Pi}_{t}$ decays exponentially (by Proposition \ref{prop_commutator_trace_class}) due to the gap condition in Lemma \ref{lem_deform}, and 2) $\|\tilde{\Pi}_{t,t^{\prime}}\|$ is small by the fact that $\|\tilde{\mathcal{C}}^{\omega}_{t}-\tilde{\mathcal{C}}^{\omega}_{t^{\prime}}\|=\mathcal{O}(|t-t^{\prime}|)$, the gap estimate in Lemma \ref{lem_deform}, and the standard perturbation theory. Hence, the first trace on the right of \eqref{eq_twofold_quantization_proof_2} is of order $\mathcal{O}(|t-t^{\prime}|)$ by Proposition \ref{prop_strong_continuity_trace}. Similar arguments work for the other trace. In conclusion, we find that
\begin{equation*}
\mathcal{I}(\tilde{\mathcal{C}}^{\omega}_{t})-\mathcal{I}(\tilde{\mathcal{C}}^{\omega}_{t^{\prime}})=\mathcal{O}(|t-t^{\prime}|).
\end{equation*}
This estimate, together with the quantization $\mathcal{I}(\tilde{\mathcal{C}}^{\omega}_{t})\in\mathbb{Z}$ (by the gap condition established in Lemma \ref{lem_deform} and the result of \cite[Section 3]{graf2018bec_disorder_chiral}), concludes the proof of \eqref{eq_twofold_quantization_proof_1}.

\begin{remark}
The gap condition is fundamental for the proof. We refer the reader to \cite{richter2002homotopy} for a more sophisticated deformation argument for quantized bulk indices, where the gap condition is replaced by a more general one, \emph{i.e.}, the mobility gap condition.
\end{remark}

\subsection{Bulk-interface correspondence} \label{sec_BIC}

Throughout this subsection, we fix $\mathcal{D}^{\omega}$ to be right $\rom{1}-$dominant and left $\rom{2}-$dominant, as supposed in Proposition \ref{thm_interface_mode_finite_chain}. The right/left bulk media are defined as
\begin{equation*}
\mathcal{D}^{\omega^R}\sim \omega^R=(\omega^R)_k:=\left\{
\begin{aligned}
&\omega_k,\quad k\geq 0, \\
&\omega_{-k},\quad k\leq 0,
\end{aligned}
\right. \quad
\mathcal{D}^{\omega^L}\sim \omega^L=(\omega^L)_k:=\left\{
\begin{aligned}
&\omega_k,\quad k\leq 0, \\
&\omega_{-k},\quad k\geq 0,
\end{aligned}
\right.
\end{equation*}
\emph{i.e.}, we simply reflect the half structure about the origin. The $\omega-$dependence will be omitted since the structure is fixed. We abbreviate the capacitance operator associated with the interface/right/left structure as $\mathcal{C}$, $\mathcal{C}^{R}$, and $\mathcal{C}^{L}$, respectively. The off-diagonal part of these operators, as introduced in \eqref{eq_off_diag_operator}, are denoted as $\tilde{\mathcal{C}}$, $\tilde{\mathcal{C}}^{R}$, and $\tilde{\mathcal{C}}^{L}$. Similarly, we denote the spectral projections as $\tilde{P}_{\pm}^{*}:=\mathbbm{1}_{\mathbb{R}^{\pm}}(\tilde{\mathcal{C}}^{*})$, with $*\in\{\text{R,L,none}\}$.

Let $\tilde{P}_{0}:=\mathbbm{1}_{\{0\}}(\tilde{\mathcal{C}})$ be the zero-energy spectral projection of the off-diagonal capacitance operator. We introduce the edge index as
\begin{equation} \label{eq_edge_index}
\mathcal{I}_{edge}(\tilde{\mathcal{C}}):=\text{tr}(\Gamma \tilde{P}_{0}).
\end{equation}
This is well-defined since the integral kernel of $\tilde{P}_{0}$ is exponentially localized. Indeed, we note that $\lambda=0$ is not included in the essential spectrum $\text{Spec}_{ess}(\tilde{\mathcal{C}})=\text{Spec}_{ess}(\tilde{\mathcal{C}^{R}})\cup \text{Spec}_{ess}(\tilde{\mathcal{C}}^{L})$, which is easily seen by the Weyl criterion and recalling that $\lambda\notin \text{Spec}(\tilde{\mathcal{C}}^{R})\cup\text{Spec}(\tilde{\mathcal{C}}^{L})$ as proved in Proposition \ref{prop_gap_condition}. Then the localization of $\tilde{P}_{0}$ follows directly from the Combes-Thomas estimate of eigenfunctions in the pure point spectrum of self-adjoint operators.

\begin{figure}
    \centering
    \includegraphics[width=0.95\linewidth]{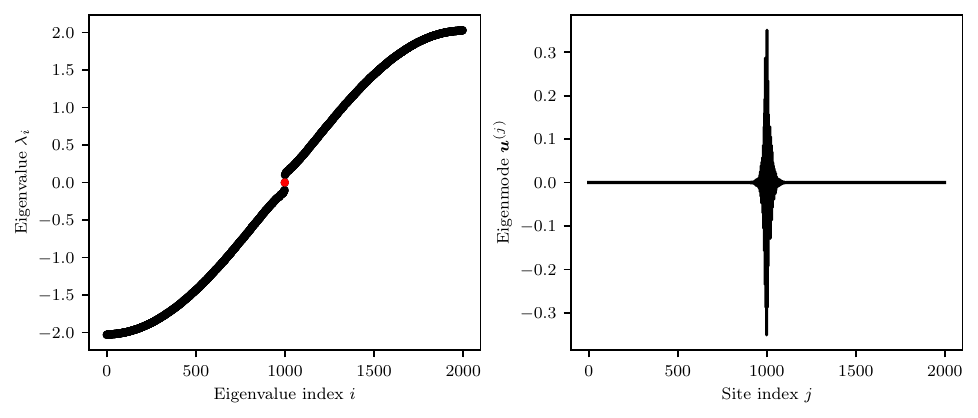}
    \caption{Spectrum and interface mode of the off-diagonal operator $\tilde{\mathcal{C}}^{\omega}$ for a left-$II$, right-$I$ dominated sequence $\omega$ ($K=1000, \delta=10^{-1}$). {Left:} Spectrum $\text{Spec}(\tilde{\mathcal{C}}^{\omega})$ with the interface eigenvalue at $\lambda=0$ highlighted in red. {Right:} The interface mode $u$ corresponding to $\lambda=0$. We can see that it is indeed exponentially localised at the interface $j=1000$.}
    \label{fig:Chiral_Spectrum}
\end{figure}

The bulk-edge correspondence indicates that the following result holds. 
\begin{proposition} \label{prop_bec}
Suppose that 1) $0\notin \text{Spec}(\tilde{\mathcal{C}}^{R})\cup \text{Spec}(\tilde{\mathcal{C}}^{L})$ and 2) $\tilde{P}_{-}$, $\tilde{P}_{-}^{R}$, and $\tilde{P}_{-}^{L}$ are tight-binding in the sense of Definition \ref{def_tight_bind}. Then, we have the following:
\begin{equation*}
\mathcal{I}_{edge}(\tilde{\mathcal{C}})=\mathcal{I}(\tilde{\mathcal{C}}^{R})-\mathcal{I}(\tilde{\mathcal{C}}^{L}).
\end{equation*}
\end{proposition}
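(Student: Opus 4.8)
The plan is to follow the standard chiral bulk--edge correspondence: reduce both indices to Fredholm indices of a single off-diagonal block, and then establish an index-additivity (gluing) identity. First I would exploit the chiral symmetry, which holds for $\tilde{\mathcal{C}}$, $\tilde{\mathcal{C}}^{R}$ and $\tilde{\mathcal{C}}^{L}$ alike, to pass to the off-diagonal representation relative to the eigenspace decomposition $\mathcal{H}=\mathcal{H}_{+}\oplus\mathcal{H}_{-}$ of $\Gamma=\mathbbm{1}\otimes\sigma_z$ (the first/second resonator sublattices). In this grading one writes
\begin{equation*}
\tilde{\mathcal{C}}=\begin{pmatrix}0 & Q^{\dagger}\\ Q & 0\end{pmatrix},\qquad
\tilde{\mathcal{C}}^{R}=\begin{pmatrix}0 & Q_{R}^{\dagger}\\ Q_{R} & 0\end{pmatrix},\qquad
\tilde{\mathcal{C}}^{L}=\begin{pmatrix}0 & Q_{L}^{\dagger}\\ Q_{L} & 0\end{pmatrix},
\end{equation*}
with $Q\colon\mathcal{H}_{+}\to\mathcal{H}_{-}$. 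Hypothesis 1) forces $Q_{R}$, $Q_{L}$ to be invertible, and the Weyl-criterion argument recorded just above the statement makes $Q$ Fredholm. Since $\Gamma$ commutes with $\tilde{P}_0=P_{\ker Q}\oplus P_{\ker Q^{\dagger}}$ and acts as $+1$ on $\ker Q\subset\mathcal{H}_{+}$ and $-1$ on $\ker Q^{\dagger}\subset\mathcal{H}_{-}$, the edge index collapses to a Fredholm index:
\begin{equation*}
\mathcal{I}_{edge}(\tilde{\mathcal{C}})=\text{tr}(\Gamma\tilde{P}_0)=\dim\ker Q-\dim\ker Q^{\dagger}=\operatorname{ind}(Q).
\end{equation*}

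Next I would identify the two bulk indices with winding numbers. Writing $U_{R/L}$ for the unitary phase of the invertible block $Q_{R/L}$, a direct block computation using $\Gamma\tilde{\Pi}=-\tilde{\Pi}\Gamma$ together with the cyclicity of the trace on the trace-class correction $[U_{R/L},\Lambda]U_{R/L}^{\dagger}$ yields
\begin{equation*}
\mathcal{I}(\tilde{\mathcal{C}}^{R/L})=\text{tr}\bigl(U_{R/L}^{\dagger}[\Lambda,U_{R/L}]\bigr).
\end{equation*}
This trace is well defined by \cref{prop_commutator_trace_class}, because hypothesis 2) makes $\tilde{P}_{-}^{R/L}$, and hence $\tilde{\Pi}_{R/L}=\mathbbm{1}-2\tilde{P}_{-}^{R/L}$ and its block $U_{R/L}$, tight-binding; and by a Gohberg--Krein--type formula it equals a Fredholm index of the compressed unitary $\Lambda U_{R/L}\Lambda+(\mathbbm{1}-\Lambda)$.

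The final and hardest step is the gluing identity $\operatorname{ind}(Q)=\mathcal{I}(\tilde{\mathcal{C}}^{R})-\mathcal{I}(\tilde{\mathcal{C}}^{L})$. The mechanism is that the interface block $Q$ coincides with $Q_{R}$ far to the right and with $Q_{L}$ far to the left, so the differences $\tilde{\Pi}-\tilde{\Pi}_{R}$ and $\tilde{\Pi}-\tilde{\Pi}_{L}$ are exponentially localized near the origin; after multiplying by a switch function one obtains trace-class operators whose traces telescope, and the change of the Fredholm index of $Q$ across the interface is expressed as the difference of the two bulk commutator-traces. I would make this precise by interpolating $Q$ to the decoupled half-line operators and invoking \cref{prop_strong_continuity_trace} to justify the limiting and regularization steps.

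I expect this last step to be the main obstacle, precisely because the chain is disordered: there is no Bloch--Floquet representation, so the winding numbers cannot be evaluated as contour integrals and index additivity must be established purely through the trace-class machinery of \cref{prop_commutator_trace_class,prop_strong_continuity_trace}, carefully controlling that the differences of the (individually non-trace-class) spectral projections become trace class only after commuting with $\Lambda$. Once this is in place the three displayed identities chain together to give the claim, and it remains only to verify the hypotheses — the zero-energy gap furnished by \cref{prop_gap_condition} and the assumed tight-binding property of the spectral projections — which hold by assumption.
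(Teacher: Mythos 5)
Your outline is correct and is essentially the argument the paper relies on: the paper proves this proposition by invoking \cite[Theorem 1]{graf2018bec_disorder_chiral} (adapted from the half-line to the interface geometry) and verifying its hypotheses, and your three steps --- the edge index as $\operatorname{ind}(Q)$ via the chiral grading, the bulk indices as commutator-traces $\operatorname{tr}(U^{\dagger}[\Lambda,U])$ of the flat-band unitaries, and the trace-class gluing identity --- are precisely the content of that cited proof. The one caveat is that your hardest step (the gluing/index-additivity identity) is left at the level of a plan, but this is exactly the part the paper itself does not reprove and instead delegates to the reference, so your proposal sits at the same level of completeness as the paper's own proof.
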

The proof follows the same lines of \cite[Theorem 1]{graf2018bec_disorder_chiral}. Note that \cite[Theorem 1]{graf2018bec_disorder_chiral} is originally posed on a half line structure, which truncates a bulk medium at the origin. Nonetheless, the proof of bulk-edge correspondence readily carries to the interface structure, as we considered, without any essential change. We note that assumption 1) holds by Proposition \ref{prop_gap_condition}, which also gives the tight-bindingness of $\tilde{P}_{-}^{R/L}$. To see that $\tilde{P}_{-}$ is also tight-binding, we recall that $\lambda=0$ is at most an isolated eigenvalue of $\tilde{\mathcal{C}}$ as discussed at the beginning of this section. Hence, we see that $\tilde{P}_{-}=\mathbbm{1}_{(-\infty,-\epsilon)}(\tilde{\mathcal{C}})$ for some $-\epsilon\notin \text{Spec}(\tilde{\mathcal{C}})$, and can conclude that $\tilde{P}_{-}$ is tight-binding using again the Combes-Thomas estimate.

With Propositions \ref{prop_bec} and \ref{prop_twofold_quantization}, we conclude that $\mathcal{I}_{edge}(\tilde{\mathcal{C}})\neq 0$ and, therefore, $\tilde{P}_{-}\neq 0$. Let $\psi\in \text{Ran}(\tilde{P}_{-})$ be an eigenfunction associated with $\lambda=0$. By Proposition \ref{prop_generalized_mode}, it necessarily holds that ${e}_1\cdot\psi_{k}=0$ to ensure $\psi\in\ell^2(\mathbb{Z})\otimes \mathbb{C}^2$. This implies that $\psi$ decouples with the diagonal perturbation operator introduced in \eqref{eq_diag_perturbation}, \emph{i.e.}, $\mathcal{E}^{\omega}\psi=0$. Hence, we conclude that $\lambda=\frac{2}{1-\delta^2}$ is an eigenvalue of the original capacitance operator by recalling the decomposition \eqref{eq_cap_operator_decom}. This concludes the proof of Proposition \ref{thm_interface_mode_finite_chain}.

The above argument is illustrated in \cref{fig:Chiral_Spectrum} for a left-$II$, right-$I$ dominated sequence $\omega$ and a large truncated resonator system $\mathcal{D}^{\omega}_K$. The off-diagonal operator $\tilde{\mathcal{C}}^{\omega}$ indeed exhibits an interface eigenvalue at $0$ with an exponentially localised eigenmode.

\section{Propagation matrix cocycle perspective} \label{propmatrix}
Finally, our aim is to connect the results thus far with the block propagation matrix framework introduced in \cite{ammari.thalhammer.ea2025Uniform}. Doing so will provide an explicit geometric perspective for the observed interface modes.

We begin by introducing some crucial concepts and results from \cite{ammari.thalhammer.ea2025Uniform}. 

\begin{definition}[Propagation matrix]\label{def:propmat}
    Given $\delta>0$ and $\lambda\geq 0$ for the type-$I$ and type-$II$ dimers, we can define the \emph{block propagation matrices}
    \begin{equation}\label{eq:propmat}
        \mc P_I^\lambda = \begin{pmatrix}
            1-\lambda(3+\delta-(1-\delta^2)\lambda)&2-(1-\delta^2)\lambda\\
            -\lambda(2-(1-\delta)\lambda) & 1-(1-\delta)\lambda
        \end{pmatrix}\in \SL(2,\R),
    \end{equation}
   where $\mc P_{II}^\lambda$ is defined analogously under the replacement $\delta\to-\delta$ and $\SL(2,\R)$ is the set of real matrices with determinant $1$.

    Given any sequence $\omega\in \ldz$, this defines a \emph{block propagation matrix cocycle} via
    \begin{equation}
        \begin{aligned}
            \mc P^\lambda:\Z &\to \SL(2,\R)\\
            j &\mapsto \mc P^\lambda_{{\omega}_{j}} .
        \end{aligned}
    \end{equation}

    We further define the \emph{cocycle iteration} as 
    \begin{equation}\label{eq:cocycle_iteration}
        \mc P^\lambda_n(i) \coloneqq \begin{cases}
            \mc P^\lambda(i+n-1)\cdot \cdot \cdot \mc P^\lambda(i), & n\geq 1,\\
            I_2, & n=0,\\
            (\mc P^\lambda(i+n))^{-1}\cdot \cdot \cdot (\mc P^\lambda(i-1))^{-1}, & n\leq-1.
        \end{cases}
    \end{equation}
\end{definition}
Intuitively, the block propagation matrices $\mc P_{I}^\lambda$ and $\mc P_{II}^\lambda$ propagate a solution tuple $(u(x), u'(x))^\top$ across their respective block and we refer to \cite[Definition 3.8]{ammari.thalhammer.ea2025Uniform} for a more detailed introduction.

The reason we are interested in block propagation matrices is that their properties completely characterize the spectrum of $\mc C^\omega$, but before we can state this result, we must introduce two conditions: the \emph{pseudo-ergodicity} of the block sequence $\omega$ and the \emph{source-sink condition} on the block propagation matrices.

\begin{definition}\label{def:pseudoergodic}
    Let $\mathbb{I} \in \{\mathbb{Z},\mathbb{Z}_\pm\}$. We say that a sequence $\omega \in \{I, II\}^\mathbb{I}$ is \emph{pseudo-ergodic} if any finite sequence $\omega' \in \ldm$, where $M \in \mathbb{N}$, is contained in $\omega$.
    
    More precisely, for any $\omega' \in \ldm$, there must exist $j_0\in \Z$ such that ${\omega}_{j_0+j} = (\omega')_{j}$ for all $j=0,\ldots, M-1$. We will denote this by $\omega'\prec \omega$.
\end{definition}

Because $\mc P_d^\lambda\in \operatorname{SL}(2,\R)$, we can identify block propagation matrices with M\"obius transformations on the real projective space $\rp$, which in turn is diffeomorphic to the sphere $\rp\simeq \mathbb{S}^1$. In this language, $\mc P_d^\lambda$ is hyperbolic if and only if the corresponding M\"oebius transformation is hyperbolic. This transformation then has two distinct fixed points, one source and one sink, which we denote by $s(\mc P_d^\lambda)\in \rp$ and $u(\mc P_d^\lambda) \in \rp$, respectively\footnote{Namely, let $(\xi_1, E_1), (\xi_2, E_2)$ be the eigenvalues and eigenspaces of $\mc P_d^\lambda$ such that $\abs{\xi_1}<1<\abs{\xi_2}$. After identifying these one-dimensional eigenspaces with points in $\rp$, we find that $s(\mc P_d^\lambda)=E_1$ and $u(\mc P_d^\lambda)=E_2$.}.

The source-sink condition ensures the existence of an invariant cone for a family of hyperbolic block propagation matrices.
\begin{definition}[Source-sink condition]
    We say that the family of hyperbolic block propagation matrices $\{\mc P_1^\lambda, \ldots, \mc P_D^\lambda\}$ satisfies the \emph{source-sink condition (SSC)} if all the sink fixed points $u(P_d^\lambda),$ for $ d=1,\ldots, D,$ lie in the same connected component of $\rp\setminus\{s(P_1^\lambda), \ldots, s(P_D^\lambda)\}$.
\end{definition}

We are now in a position to characterize the spectrum of $\mc C^\omega$.
\begin{theorem}[\!{\cite[Theorem 3.20]{ammari.thalhammer.ea2025Uniform}}\;]\label{thm:saxonhutner}
    Let $\delta>0$ and consider a pseudo-ergodic block sequence $\omega \in \ldz$ and corresponding Jacobi operator $\mc C^\omega$. 
    
    Let $\lambda\geq 0$ and assume that both block propagation matrices $\mc P_d^\lambda, d=I, II$ are hyperbolic, \emph{i.e.},
    \[
        \abs{\tr \mc P_d^\lambda} > 2 \quad \text{ for both }d=I,II.
    \]
    Assume further that the family $\{\mc P_I^\lambda, \mc P_{II}^\lambda\}$ satisfies the \emph{source-sink condition}. Then, we must have 
    \[
        \lambda\notin  \text{Spec}(\mc C^\omega).
    \]

    Conversely, if we have $\abs{\tr \mc P_d^\lambda} \leq 2$ for any $d=I, II$, then $\lambda \in \text{Spec}(\mc C^\omega)$.
\end{theorem}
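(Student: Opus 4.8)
The plan is to translate the spectral question for $\mc C^\omega$ into a dynamical question about the cocycle $\mc P^\lambda$, using the standard dictionary between transfer-matrix dynamics and the resolvent set of a one-dimensional operator. Concretely, a formal solution of $(\mc C^\omega-\lambda)\psi=0$ is recorded block by block as an orbit of the cocycle iteration $\mc P_n^\lambda(i)$ from \eqref{eq:cocycle_iteration}, so that $\lambda$ lies in the resolvent set of $\mc C^\omega$ if and only if $\mc P^\lambda$ admits a uniform exponential dichotomy over the shift orbit of $\omega$. This is the Jacobi-operator analogue of Johnson's theorem identifying the resolvent set with uniform hyperbolicity of the associated $\SL(2,\R)$-cocycle. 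I would take this equivalence as the backbone and prove the two implications by establishing, respectively, uniform hyperbolicity and its failure.

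For the gap direction, assume both $\mc P_d^\lambda$ are hyperbolic and the family satisfies the source-sink condition. The sources $s(\mc P_I^\lambda), s(\mc P_{II}^\lambda)$ cut $\rp\simeq \mathbb{S}^1$ into two arcs, and the SSC places both sinks $u(\mc P_I^\lambda), u(\mc P_{II}^\lambda)$ in one of them; let $\mc K$ be its closure, shrunk slightly so that it stays away from both sources. Since each M\"obius map $\mc P_d^\lambda$ has attracting fixed point $u(\mc P_d^\lambda)$ in the interior of $\mc K$ and repelling fixed point $s(\mc P_d^\lambda)$ outside $\mc K$, it maps $\mc K$ strictly into its own interior; as the alphabet is finite, this contraction is uniform. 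The standard cone criterion for $\SL(2,\R)$-cocycles then upgrades the single-step invariant cone to uniform hyperbolicity of every product $\mc P_n^\lambda(i)$, with rates independent of $i,n$ and of $\omega$. Uniform hyperbolicity furnishes the exponential dichotomy, hence an exponentially decaying Green kernel, and therefore $\lambda\notin \text{Spec}(\mc C^\omega)$. Note that pseudo-ergodicity is not needed here; the uniformity comes entirely from the finiteness of the alphabet.

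For the converse, suppose $\abs{\tr \mc P_d^\lambda}\leq 2$ for some $d$, so $\mc P_d^\lambda$ is elliptic or parabolic. Consider the periodic operator $\mc C^{d^\infty}$ obtained by tiling with a single block type $d$; by Floquet--Bloch theory its spectrum is exactly $\{\lambda: \abs{\tr \mc P_d^\lambda}\leq 2\}$, so $\lambda\in \text{Spec}(\mc C^{d^\infty})$. Pseudo-ergodicity of $\omega$ guarantees that the word $d^M$ occurs in $\omega$ for every $M$, i.e.\ arbitrarily long pure-$d$ runs. I would then build a Weyl sequence for $\mc C^\omega$ at $\lambda$ by taking the polynomially bounded Floquet solution of the periodic problem, truncating it to a long pure-$d$ window sitting inside $\omega$, and normalizing; the truncation error is localized at the two ends of the window and is negligible relative to the bulk norm, yielding approximate eigenfunctions and hence $\lambda\in \text{Spec}_{ess}(\mc C^\omega)\subset \text{Spec}(\mc C^\omega)$.

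The main obstacle is the rigorous form of the dynamical dictionary in the gap direction: showing that the geometric single-block cone condition really forces a \emph{uniform} exponential dichotomy for the operator, and that this dichotomy is equivalent to $\lambda$ lying in the resolvent set rather than merely to the absence of $\ell^2$ eigenfunctions. The delicate points are (i) verifying that the cone contraction is uniform over all admissible finite words, where the two-letter alphabet is essential, and (ii) excluding the critical scenario in which the cocycle is uniformly hyperbolic yet $\lambda$ still lies in the spectrum, which is ruled out by the $\SL(2,\R)$ structure but must be argued through the exponential-dichotomy characterization of the resolvent. The converse direction is comparatively routine once the periodic band condition and pseudo-ergodicity are in hand.
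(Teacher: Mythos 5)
The paper does not actually prove this theorem --- it is imported wholesale from the cited reference, so there is no in-paper argument to compare against. Your reconstruction (for the gap direction, turning the source-sink condition into a common strictly invariant cone on $\rp$, hence uniform hyperbolicity of the cocycle, an exponential dichotomy, and a bounded Green kernel; for the converse, Weyl sequences built from Floquet solutions on the arbitrarily long single-letter runs that pseudo-ergodicity supplies) is the standard proof of such Saxon--Hutner-type statements and is essentially the argument of the cited source; the only step you gloss over is that the block propagation matrices are related to the transfer matrices of the discrete operator $\mc C^\omega$ by the bounded cocycle cohomology of the cited work, a conjugacy which preserves both uniform hyperbolicity and the Weyl-sequence construction.
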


To find the spectrum of $\mc C^\omega$ it thus suffices to study the bandgaps (\emph{i.e.}, the intervals where $\abs{\tr \mc P_d^\lambda}>2$) and source-sink behaviour of the block propagation matrices. From \cref{eq:propmat} we can explicitly find the bandgaps  for both types of blocks. Namely, we find that
\begin{equation}
    \abs{\tr \mc P_d^\lambda}>2 \iff \lambda \in (\frac{2}{1+\delta}, \frac{2}{1-\delta})\cup (\frac{4}{1-\delta^2}, \infty)
\end{equation}
for both $d=I,II$. The fact that both blocks have the same band structure can alternatively be seen from the fact that the infinite periodic arrangements of either type of blocks are indistinguishable from the other.

Now we turn our attention to the source-sink behaviour. To more easily deal with the projective plane $\rp$, we introduce the following chart:
\begin{align*}
    \varphi: \rp &\to \overline{\R}\coloneqq\R \cup \{\infty\}\\
    (a,b) &\mapsto \frac{a}{b} .
\end{align*}

For a function $f:[a,b] \to \overline{\R}$ with $a<b$, we will use the notation $f:f(a)\nearrow f(b)$ on $[a,b]$ if $f$ \emph{increases monotonically} (with the possible exception of a jump from $+\infty$ to $-\infty$) from $f(a)$ to $f(b)$. Analogously, we write $f:f(a)\searrow f(b)$ on $[a,b]$ if $f$ \emph{decreases monotonically}.
\begin{figure}
    \centering
    \includegraphics[width=0.95\linewidth]{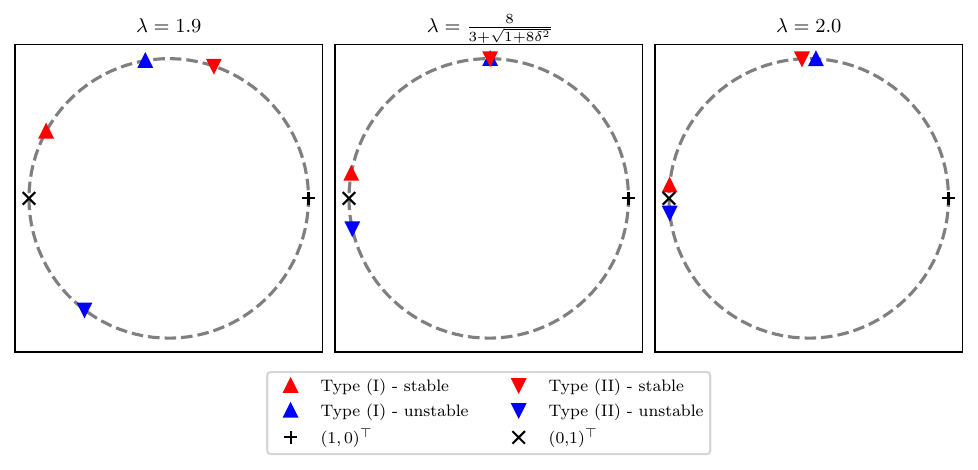}
    \caption{Source (stable) $s(\mc P^\lambda_d)$ and sink (unstable) $u(\mc P^\lambda_d)$ phase for both block types $d=I,II$ on the projective space $\rp\simeq S^1$ as $\lambda$ passes through the critical value $c_1$ (for $\delta=10^{-1}$). We observe that the SSC is violated for $\lambda<c_1$, that $u(\mc P_I^{\lambda})=s(\mc P_{II}^{\lambda})$ for $\lambda=c_1$ and that the SSC is fulfilled afterwards.}
    \label{fig:Source_Sink_Transition}
\end{figure}

For the type-$I$ block in the first bandgap $G_1=(g_1,g_2)\coloneqq[\frac{2}{1+\delta}, \frac{2}{1-\delta}]$, we now find that 
\[
 \lambda\mapsto\varphi(u(\mc P_I^\lambda)):\frac{1+\delta}{2}\nearrow+\infty 
 \quad\text{and}\quad
 \lambda\mapsto\varphi(s(\mc P_I^\lambda)):\frac{1+\delta}{2}\searrow-\infty.
\]
Conversely, for the type-$II$ block, we find that 
\[
 \lambda\mapsto\varphi(u(\mc P_{II}^\lambda)):-\infty \nearrow \frac{1-\delta}{2}
 \quad\text{and}\quad
 \lambda\mapsto\varphi(s(\mc P_{II}^\lambda)):+\infty \searrow \frac{1-\delta}{2}.
\]
Thus, the sources and sinks of $\mc P_{I}^\lambda$ and $\mc P_{II}^\lambda$ differ both in their orientation and start-/endpoints as $\lambda$ moves through $G_1$.

From their different orientations, it follows that the source-sink condition is violated in a neighbourhood of both endpoints of $G_1$. In particular, we can explicitly find that the source-sink condition is violated in the sub-intervals $[g_1, c_1]$ and $[c_2, g_2]$ of $G_1$, while it is satisfied in the interval $(c_1,c_2)$
where
\[
    c_1\coloneqq\frac{8}{3+\sqrt{1+8\delta^2}} \quad \text{and} \quad   c_2\coloneqq\frac{2}{1-\delta^2}.
\]
These critical points are exactly the crossover points 
\[
    u(\mc P_I^{c_1})=s(\mc P_{II}^{c_1})\quad \text{and} \quad s(\mc P_I^{c_2})=u(\mc P_{II}^{c_2}),
\]
where the source phase of one block aligns with the sink phase of the other.
This crossover at point $c_1$ is illustrated in Figure \ref{fig:Source_Sink_Transition}. We can see that the SSC is violated for $g_1\leq \lambda\leq c_1$ but fulfilled for $c_1\leq \lambda\leq c_2$. Notably, in contrast to the off-diagonal operator, this analysis also reveals that the interface modes at $c_1$ or $c_2$ are not necessarily separated from the rest of the spectrum, but merely either at the top or bottom of the band.

In Figure \ref{fig:SSC_violation}, the connection between these block-level properties and the spectrum $\text{Spec}(\mc C^\omega)$ of $\mc C^\omega$ is exhibited clearly. In particular, we can see that even though $G_1$ is a shared bandgap of both block types, the gap of the composite system is heavily polluted due to the  violation of the SSC. Nevertheless, in the interval $(c_1,c_2)$ in which the SSC is indeed fulfilled, the composite system \emph{does} exhibit a bandgap.
\begin{figure}
    \centering
    \includegraphics[width=0.5\linewidth]{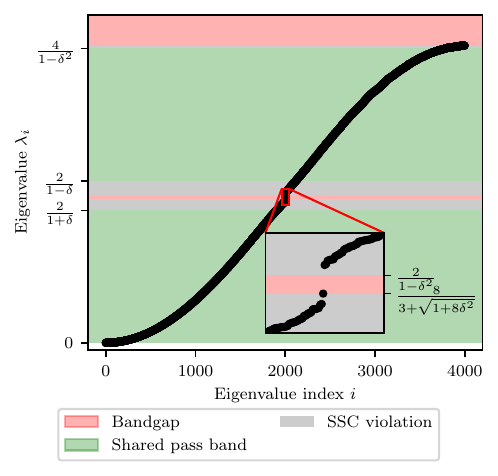}
    \caption{Spectrum $\text{Spec}(\mc C^\omega)$ for a left$-I$, right-$II$ dominated block disordered system ($K=2000, \delta=10^{-1})$, together with the distinct spectral regions determined by the block properties. Here, \emph{shared pass band} denotes $\lambda$ in the band of both blocks, \emph{SSC violation} denotes $\lambda$ in the shared gap but with violated SSC, and \emph{bandgap} denotes the bandgap of the composite system corresponding to $\lambda$ in the shared gap together with a fulfilled SSC.}
    \label{fig:SSC_violation}
\end{figure}

Finally, we are in a position to discuss the existence of the interface modes under differing domination regimes. 
The propagation matrix characterization yields a very straightforward explanation, as seen in the following lemma. 
\begin{lemma}\label{lem:propmatdecay}
Consider a left$-I$, right-$II$ dominated sequence $\omega$ and let $\lambda=c_1$ be such that $u(P_I^\lambda) = s(P_{II}^\lambda)$. Let $v\in u(P_I^\lambda) = s(P_{II}^\lambda)$ with $\norm{v}$=1. Then, there exist an $\varepsilon>0$, $\xi <0$, and $N\in \N$ such that, for any $n\in \Z$ with $\abs{n}>N$, we have
\[
    \norm{\mc P^\lambda_n(0) v} < \xi^{\varepsilon(n+1)}.
\]

The same result holds for a right$-I$, left-$II$ dominated sequence $\omega$ with $\lambda=c_2$ and $v\in s(P_I^\lambda) = u(P_{II}^\lambda)$.
\end{lemma}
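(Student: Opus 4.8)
The plan is to exploit the fact that the choice $\lambda=c_1$ is precisely the value at which $u(\mc P_I^\lambda)=s(\mc P_{II}^\lambda)$, and to observe that this coincidence of \emph{lines} in $\rp$ forces $v$ to be a \emph{simultaneous eigenvector} of both block propagation matrices. Indeed, $v\in u(\mc P_I^\lambda)$ means $v$ spans the expanding eigenline of $\mc P_I^\lambda$, while $v\in s(\mc P_{II}^\lambda)$ means that the very same line is the contracting eigenline of $\mc P_{II}^\lambda$. Since $\mc P_I^\lambda$ and $\mc P_{II}^\lambda$ share the trace $\tr\mc P_d^\lambda=2-4\lambda+(1-\delta^2)\lambda^2$ and both lie in $\SL(2,\R)$, they have the same characteristic polynomial, hence the same pair of eigenvalue moduli $\{\mu,\mu^{-1}\}$ with $\mu>1$ (here $\abs{\tr\mc P_d^{c_1}}>2$ because $c_1\in G_1$). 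I would therefore record
\[
 \mc P_I^\lambda\, v=\pm\mu\, v,\qquad \mc P_{II}^\lambda\, v=\pm\mu^{-1} v,
\]
so that on the right half-chain $v$ is \emph{expanded} by the minority type-$I$ blocks and \emph{contracted} by the majority type-$II$ blocks.

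Because $v$ is a common eigenvector, the entire cocycle acts on it by a scalar, which collapses the estimate to a counting problem. For $n\geq1$ the product $\mc P^\lambda_n(0)=\mc P^\lambda_{\omega_{n-1}}\cdots\mc P^\lambda_{\omega_0}$ multiplies $v$ by a factor of modulus $\mu$ for each type-$I$ block and $\mu^{-1}$ for each type-$II$ block among $\omega_0,\dots,\omega_{n-1}$, giving
\[
 \norm{\mc P^\lambda_n(0)v}=\mu^{\,C_I(n-1;\omega)-C_{II}(n-1;\omega)}=\mu^{\,2C_I(n-1;\omega)-n}.
\]
For $n\leq-1$ the iteration \eqref{eq:cocycle_iteration} uses the inverse matrices, which interchanges the two moduli; writing $\abs{n}$ for the number of factors one obtains analogously
\[
 \norm{\mc P^\lambda_n(0)v}=\mu^{\,\abs n-2C_I(n;\omega)},
\]
where now $C_I(n;\omega)$ counts the type-$I$ blocks in $\{n,\dots,-1\}$, exactly as in Definition~\ref{def_dominant_system}.

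The proof is then finished by feeding the domination hypotheses into these exponents. Right $II$-dominance gives $C_I(n-1;\omega)/n\to p_+<\tfrac12$, so $2C_I(n-1;\omega)-n\leq-(1-2p_+-o(1))\,n$ and the forward exponent tends to $-\infty$ linearly; left $I$-dominance gives $C_I(n;\omega)/\abs n\to q_->\tfrac12$, so $\abs n-2C_I(n;\omega)\leq-(2q_--1-o(1))\abs n$ and the backward exponent does the same as $n\to-\infty$. Choosing $\varepsilon>0$ below both $1-2p_+$ and $2q_--1$, and $N$ large enough to absorb the $o(1)$ terms, yields $\norm{\mc P^\lambda_n(0)v}<\mu^{-\varepsilon(\abs n+1)}$ for all $\abs n>N$, i.e. the asserted exponential decay with rate $\xi=\mu^{-1}\in(0,1)$. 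The right-$I$, left-$II$ case at $\lambda=c_2$ is identical after relabelling: there $v\in s(\mc P_I^\lambda)=u(\mc P_{II}^\lambda)$, so $v$ is contracted by the majority (now type-$I$, on the right) blocks and expanded by the type-$II$ blocks, and the same counting applies with the roles of $p_+$ and $q_-$ exchanged.

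I do not expect a genuine analytic obstacle: once the simultaneous-eigenvector observation is in place the argument is elementary. The steps demanding care are the bookkeeping — correctly attaching each eigenvalue modulus to its block type and to the propagation direction (forward product versus product of inverses), and checking the sign of the resulting exponent in each half-chain — and the passage from the \emph{asymptotic} densities of Definition~\ref{def_dominant_system} to a bound that is \emph{uniform} over all $\abs n>N$. This last point is exactly where the \emph{strict} inequalities $p_+<\tfrac12<q_-$ are indispensable: they provide the spectral gap in the exponent that survives the $o(1)$ fluctuations and produces a single decay rate $\varepsilon$ valid in both directions.
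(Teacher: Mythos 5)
Your proposal is correct and follows essentially the same route as the paper's proof: the key observation in both is that $v$ is a simultaneous eigenvector of $\mc P_I^\lambda$ and $\mc P_{II}^\lambda$ with reciprocal eigenvalues (the paper calls this the band-symmetry $\xi_s=\xi_u^{-1}$), so the cocycle acts by a scalar whose exponent is controlled by the block counts and the domination hypothesis. Your treatment is in fact slightly more careful than the paper's on the sign of the eigenvalues, the off-by-one in the block count, and the explicit handling of the $n\le -1$ case (which the paper dismisses as ``without loss of generality'').
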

\begin{proof}
    We shall consider $n>0$ without loss of generality. Due to their band-symmetry of the blocks we have that $\xi_s = \xi_u^{-1}<1$ where $\xi_s$ is the eigenvalue of $P_{II}^\lambda$ corresponding to the stable direction and $\xi_u$ is the eigenvalue of $P_{I}^\lambda$ corresponding to the unstable direction, and will thus simply denote by $\xi\coloneqq \xi_s$. 
    Due to the fact that $v\in u(\mc P_I^\lambda) = s(\mc P_{II}^\lambda)$, we have 
    \[
        \mc P_{II}^\lambda v = \xi v \quad \text{and} \quad \mc P_{I}^\lambda, v = \xi^{-1} v
    \]
    and therefore also 
    \[
        \norm{\mc P^\lambda_n(0)v} = \xi^{-C_I(n;\omega)}\xi^{C_{II}(n;\omega)} = \xi^{2C_{II}(n;\omega) - (n+1)}.
    \]
    But, from the right-$II$ domination of $\omega$, there must exist an $\varepsilon>0$ and $N\in N$ such that $C_{II}(n;\omega)>\frac{1+\varepsilon}{2}(n+1)$ for any $n>N$.
    Plugging this into the above equality yields \[
        \norm{\mc P^\lambda_n(0)v} < \xi^{\varepsilon(n+1)},
    \]
    as desired.
\end{proof}

In particular, the above lemma ensures that the sequence $j\mapsto \mc P^\lambda_n(0) v$ lies in $\ell^2(\Z)\otimes\R^2$ and can thus be transformed into a square-summable eigenmode of $\mc C^\omega$ via the cocycle cohomology described in \cite[Definition 3.8]{ammari.thalhammer.ea2025Uniform}.

\begin{figure}
    \centering
    \includegraphics[width=0.95\linewidth]{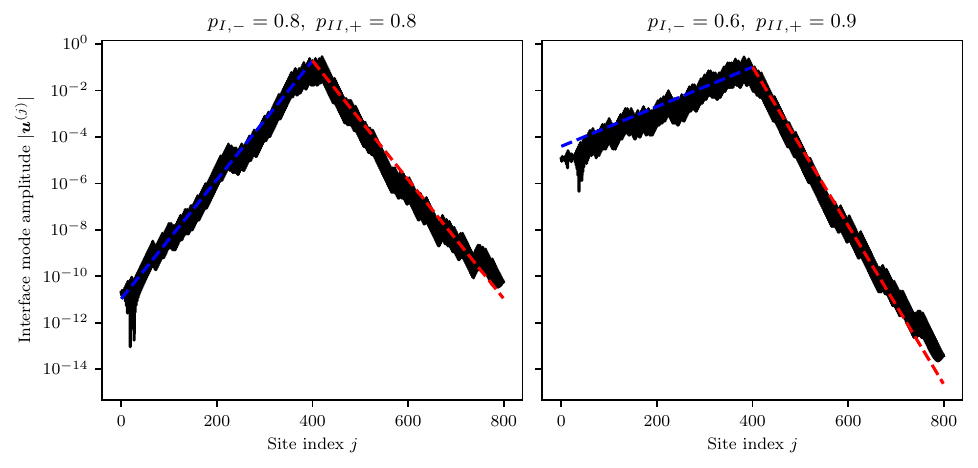}
    \caption{Interface mode for two block disordered left-$I$, right-$II$ dominated systems with varying domination strengths ($K=400, \delta=10^{-1}$). The expected left and right decay, as predicted by $\xi^{\varepsilon_-}$ and $\xi^{\varepsilon_+}$, is plotted as a blue dashed and red dashed line respectively.
    \textbf{Left:} Equal strength left and right domination. \textbf{Right:} Weak left and strong right domination.}
    \label{fig:Interface_Mode_Decay}
\end{figure}
\begin{remark}\label{rmk:Interface_Mode_Decay}
    Moreover, the above lemma also directly yields a precise decay estimate of the interface mode. Consider again a left$-I$, right-$II$ dominated sequence $\omega$ and let $\lambda=c_1$ and $\xi = \xi_s(\mc P_{II}^\lambda) = (\xi_u(\mc P_{I}^\lambda))^{-1}$ be the propagation matrix eigenvalue. From \cref{def_dominant_system}, we know that there must exist $\varepsilon_-, \varepsilon_+>0$ such that 
    \[
        \lim_{K\to\infty}\frac{C_I(-K;\omega)}{K+1}=\frac{1+\varepsilon_-}{2} \quad \text{and} \quad\lim_{K\to\infty}\frac{C_{II}(K;\omega)}{K+1}=\frac{1+\varepsilon_+}{2}.
    \]

    From the above lemma it then follows that the asymptotic decay rate of the interface mode in, respectively, the negative and positive direction must be $\xi^{\varepsilon_-}$ and $\xi^{\varepsilon_+}$. This prediction is illustrated in \cref{fig:Interface_Mode_Decay} for varying domination strengths. As predicted above, stronger domination leads to a more rapid decay of the interface mode. We control domination strength by sampling the block sequence $\omega$ independently and identically distributed from the set $\{I, II\}$ with probability $p_{I,-}, p_{II,-}$ in the left half and $p_{I,+}, p_{II,+}$ in the right half.
\end{remark}

\section{Quasi-periodic sequences}
\begin{figure}
    \centering
    \includegraphics[width=0.95\linewidth]{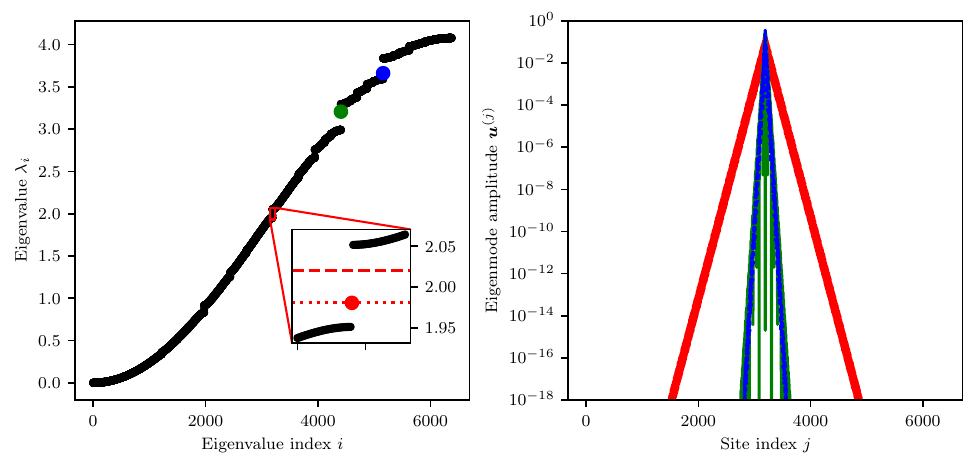}
    \caption{Spectrum and interface modes for a block disordered left-$I$, right-$II$ dominated system with a quasi-periodic sequence $\omega$ after $16$ Fibonacci iterations ($\delta=10^{-1}$). {Left:} Spectrum with \emph{multiple} interface modes. The classical interface eigenvalue, predicted by \cref{thm_interface_mode_finite_chain}, is marked in red and again lies at the lower critical point. However, we can see that the system now exhibits a range of additional bandgaps together with corresponding new interface eigenvalues (marked in blue and green). {Right:} Interface modes corresponding to the interface eigenvalues, coloured in line with their respective frequencies.}
    \label{fig:Fibonacci_Interface_Modes}
\end{figure}
Notably, while we considered sequences $\omega$ sampled \emph{i.i.d.} in the previous section, \cref{thm_interface_mode_finite_chain} holds for any dominated sequence -- including quasi-periodic ones. The canonical quasi-periodic sequence we consider is the \emph{Fibonacci tiling} (see, for instance, \cite{jagannathan2021fibonacci}) that can be constructed using a simple replacement rule as follows.
\begin{definition}[Fibonacci tiling]
We define the $k$\textsuperscript{th} \emph{$II$-dominated Fibonacci tiling} $\omega_k\in \{I,II\}^M$ recursively as $\omega_0 = (I)$ and $\omega_{k+1}$ obtained from $\mc \omega_{k}$ using the replacement rules $I \mapsto II$ and $II \mapsto II, I$. The \emph{$I$-dominated Fibonacci tiling} is defined from the $II$-dominated Fibonacci tiling under the exchange $I \leftrightarrow II$.    
\end{definition}
A signature characteristic of the $II$-dominated Fibonacci tiling is that the ratio of type-$I$ symbols to type-$II$ converges to $1:\phi$, where $\phi = \frac{1+\sqrt{5}}{2}$ is the \emph{golden ratio}. To be precise, we have 
\[
    \lim_{k\to\infty}\frac{C_I(M_k;\omega_k)}{M_k}=\frac{1}{\phi^2}<\frac{1}{2} \quad \text{and} \quad \lim_{k\to\infty}\frac{C_{II}(M_k;\omega_k)}{M_k}=\frac{1}{\phi}>\frac{1}{2},
\]
where $M_k$ denotes the \emph{length} of the $k$\textsuperscript{th} Fibonacci tiling. Therefore, it is indeed $II$-dominated in the sense of \cref{def_dominant_system}.

The Fibonacci tiling does not satisfy the \emph{pseudo-ergodicity} condition,  preventing a direct application of \cref{thm:saxonhutner}. 
However, from \cite[Proposition 3.3]{ammari.thalhammer.ea2025Uniform}, we know that 
\[
    \text{Spec}(\mc C^{\omega'}) \subset \text{Spec}(\mc C^{\omega})
\]
for any sequence $\omega'\in \{I,II\}^{\Z}$ and any pseudo-ergodic sequence $\omega\in \{I,II\}^{\Z}$. Consequently, any block disordered system arranged according to a Fibonacci tiling still retains the bandgap $(\frac{8}{3+\sqrt{1+8\delta^2}}, \frac{2}{1-\delta^2})$ guaranteed by \cref{thm:saxonhutner}, but may exhibit additional ones. 

In \cref{fig:Fibonacci_Interface_Modes}, we consider a left-$I$, right-$II$ dominated system constructed by concatenating a $I$-dominated Fibonacci tiling with a $II$-dominated Fibonacci tiling. In fact, this system exhibits a variety of additional bandgaps. Furthermore, \cref{thm_interface_mode_finite_chain} appears to hold also for these additional bandgaps, giving the corresponding interface modes.

\section{Concluding remarks}

In this paper, we have proved the existence of topologically protected interface modes in block disordered chains of resonator dimers of dominated type. Our results can be rigorously extended to other aperiodic systems such as bound length, hyperuniform, and quasiperiodic samplings, as done in \cite{ammari.barandun.ea2025Universal}. This would be the subject of a forthcoming paper. We also plan to extend our study to higher dimensions. In an on-going project \cite{qiu2025bec_disorder}, we develop a mathematical theory for a two-dimensional topological Anderson insulator, in which the disorder is present by random distribution of the gap-opening potential (\emph{e.g.}, the $\sigma_z$ term in the Qi-Wu-Zhang model \cite{qi2006qwz_model,ezawa2024nonlinear_phase_transition}). With the technique in \cite{qiu2025bulk,elgart2005shortrange+functional}, we can provide a topological characterization of the random-gap model and prove the bulk-edge correspondence principle on a finite sample.

\section*{Acknowledgments} This work was supported in part by the Swiss National Science Foundation grant number 200021-236472. It was completed while H.A. was visiting the Hong Kong Institute for Advanced Study as a Senior Fellow.

\footnotesize
\bibliographystyle{plain}
\bibliography{ref, zotero}

\end{document}